
\documentclass[11pt, letterpaper]{article}

\usepackage[margin=1in]{geometry}

\usepackage{mathtools, amssymb, amsthm, fontawesome5}
    \def\NN{\mathbf N}
    
    \def\PP{\mathbb P}
    \def\TT{\mathcal T}
    \def\geq{\geqslant}
    \def\leq{\leqslant}
    \newtheorem{theorem}{Theorem}
    \newtheorem{lemma}[theorem]{Lemma}
    \newtheorem{remark}[theorem]{Remark}
    
    \newtheorem{conjecture}[theorem]{Conjecture}
    \newtheorem{proposition}[theorem]{Proposition}

\usepackage{tikz, booktabs}
    \usetikzlibrary{arrows.meta, fadings, shapes.geometric, shapes.symbols}
    \tikzset{every picture/.style={line cap=round,line join=round}}

\usepackage[colorlinks, allcolors=blue!50!black]{hyperref}

\begin{document}

                                   \title
                            {MichelangeRoll: \\ 
          Sculpting Rational Distributions Exactly and Efficiently}

                                   \author
            {Jui-Hsiang Shao \\\small National Taiwan University
                                    \and
              Hsin-Po Wang \\\small National Taiwan University}
                                      
                         \def\day#1\year{\the\year}
                                 \maketitle

\begin{abstract}
    Simulating an arbitrary discrete distribution $D \in [0, 1]^n$ using fair
    coin tosses incurs trade-offs between entropy complexity and space and
    time complexity.  Shannon's theory suggests that $H(D)$ tosses are
    necessary and sufficient, but does not guarantee exact distribution.
    Knuth and Yao showed that a decision tree consumes fewer than $H(D) + 2$
    tosses for one exact sample.  Draper and Saad's recent work addresses
    the space and time aspect, showing that $H(D) + 2$ tosses, $O(n \log(n)
    \log(m))$ memory, and $O(H(D))$ operations are all it costs, where $m$ is
    the common denominator of the probability masses in $D$ and $n$ is the
    number of possible outcomes.
    
    In this paper, MichelangeRoll recycles leftover entropy to break the
    ``$+2$'' barrier.  With $O((n + 1/\varepsilon) \log(m/\varepsilon))$
    memory, the entropy cost of generating a ongoing sequence of $D$ is
    reduced to $H(D) + \varepsilon$ per sample.
\end{abstract}

\section{Introduction}

    When it comes to tossing fair coins to sample a discrete distribution
    $D$, there are several prices to pay.  The first is the expected number
    of tosses per $D$-sample.  In this regard, Shannon's source coding
    theorem can be used to show that $H(D)$ tosses are necessary and
    sufficient.  Algorithmically speaking, we can ``decompress'' a random
    binary string as if the string were the result of compressing a sequence
    of $D$-samples.    

    The decompression approach, however, often generates a distribution
    slightly different from $D$.  Knuth and Yao \cite{KnY76} constructed
    binary decision trees that consume $H(D) + 2$ tosses to sample $D$
    exactly, suggesting that ``$+2$'' is the price for exactness.  Given
    that, it is not hard to imagine that we can generate a $D^2$-sample,
    i.e., two iid $D$-samples, with $H(D^2) + 2$ tosses.  This reduces the
    entropy cost to $H(D) + 1$ per $D$-sample.
    Similarly, generating a batch of $b$ samples with only one ``$+2$'' will
    reduce the entropy cost to $H(D) + 2/b$ per sample.
    Since it can be easily amortized, it is stretching to call ``$+2$'' a
    fundamental price.  This begs the question: \emph{Is Shannon's bound more
    fundamental?}

    The problem with generating $D^b$-samples with a high $b$ to amortize
    ``$+2$'' is that we pay space and time to store and process complicated
    distributions.  More quantitatively, $n$ the number of possible outcomes
    will blow up to $n^b$ and $m$ the denominator of the probability masses
    will blow up to $m^b$.  That is to say, ``$+2$'' is more like a
    computational barrier than an exactness barrier.  On top of that, Knuth
    and Yao's tree is already not so trivial to construct even for $b = 1$.
    Hence, both $H(D)$ and $H(D) + 2$ stand as fundamental limits, just for
    different reasons.

    Subsequently, a series of papers clarify the computational complexity
    aspect of generating $D$-samples.  The most recent one is by Draper and
    Saad \cite{DrS25}; they make the decision tree ``more periodic'' so it
    fits in $O(n \log(n) \log(m))$ memory.  The new tree also limits to $H(D)
    + 2$ tosses and $O(H(D) + 1)$ operators per sample, matching the previous
    state of the art.

    This paper introduces \emph{MichelangeRoll}, which uses an
    \emph{asymmetric numeral system} to recycle leftover entropy that is
    neglected in earlier works.
    Modulo the concurrent work \cite{DrS25a},
    this is the first entropy sculptor that breaks
    the ``$+2$'' barrier without an exponential space complexity.  The
    precise parameters are stated in Table~\ref{tab:recent} as well as the
    main theorem below.

    \begin{theorem} [main]                                   \label{thm:main}
        Let $m$ and $n$ be positive integers.  Let $D$ be a distribution over
        $n$ outcomes.  Let $m$ be the common denominator of the probability
        masses in $D$.  For any small constant $\varepsilon > 0$, there
        exists an algorithm that generates an ongoing sequence of $D$-samples
        using $H(D) + \varepsilon$ fair coin tosses per samples, $O((n +
        1/\varepsilon) \log(m/\varepsilon))$ memory, and
        $O(\log(m/\varepsilon)^2 / \varepsilon)$ operations per sample.
    \end{theorem}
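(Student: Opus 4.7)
The plan is to realize MichelangeRoll as a range asymmetric-numeral-system (rANS) decoder whose state $x$ lives in a window $[L, bL)$ chosen large enough to carry across samples the fractional entropy that would otherwise be discarded. One sampling step reads the residue $r = x \bmod M$, where $M = m$ is the common denominator and the symbols carve $\{0, \ldots, M-1\}$ into blocks of sizes $f_1, \ldots, f_n$; outputs the symbol $s$ whose block contains $r$; updates the state by $x \leftarrow f_s \lfloor x/M \rfloor + r - b_s$; and finally shifts fresh fair coin tosses into the low bits of $x$ until $x \geq L$. Since the update is a bijection, the entropy present in $x$ but not needed to pin down $s$ survives in the new state and finances the next sample---this is the sculpting step that undoes Knuth and Yao's ``$+2$''.

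For per-sample exactness I would take $L$ to be a multiple of $M$, so that a uniform $x \in [L, bL)$ has $r$ uniform on $\{0, \ldots, M-1\}$ and the output lands in block $s$ with probability exactly $f_s/M = p_s$. The update is bijective between the pre-image set $\{x \in [L, bL) : r \in [b_s, b_s + f_s)\}$ and the interval $[f_s L/M, b f_s L/M) \cap \ZZ$, and renormalization by fair coin tosses preserves uniformity on whatever range the state then occupies; hence after one full decode-and-renormalize cycle the new $x$ is again uniform on $[L, bL)$. A one-line induction on the sample index propagates this invariant along the entire ongoing sequence, giving exact distribution $D$ at every position.

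For the three cost bounds, the state takes $O(\log(m/\varepsilon))$ bits, the frequency table $O(n \log m)$ bits, and the inverse lookup from $r$ to symbol $O(M)$ cells of $O(\log n)$ bits, so choosing $M, L \asymp m/\varepsilon$ delivers the stated $O((n + 1/\varepsilon)\log(m/\varepsilon))$ memory. Each decode performs a constant number of arithmetic operations on $O(\log(m/\varepsilon))$-bit integers plus an amortized $H(D)$ bit shifts. The expected number of bits shifted in per decode equals $\sum_s p_s \log_2(M/f_s) = H(D)$, and over $k$ samples the total bits read is $k H(D) + O(\log L)$ exactly (since initial and final state sizes are both $O(\log L)$); dividing gives $H(D) + O(\log(L)/k)$, which drops below $H(D) + \varepsilon$ once the stream is long enough.

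The main obstacle I foresee is that the clean bijection above closes on $[L, bL)$ exactly only when each $M/f_s$ is a power of $b$; for general rational $p_s$, shifting whole bits overshoots or undershoots the state window and threatens both uniformity and exactness. My intended fix is to let $L$ be a large multiple of $M$ and absorb the overshoot into the $O(1/\varepsilon)$ slack built into $L$, at the cost of at most $\varepsilon$ extra bits per sample; alternatively a tabled variant of ANS, whose decoder permutation on $\{0, \ldots, M-1\}$ is handcrafted so that the state window is invariant under one decode-and-renormalize cycle, re-expresses the same trade-off as a precomputed table rather than live arithmetic. Either way, the $1/\varepsilon$ additive term in the memory budget is precisely what is spent on closing this renormalization gap.
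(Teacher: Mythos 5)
Your proposal takes a genuinely different route from the paper, but it has a gap that you yourself half-recognize and then underestimate. The crux is your invariant ``after one full decode-and-renormalize cycle the new $x$ is again uniform on $[L, bL)$.'' This is false whenever some $M/f_s$ is not a power of $b$, and the failure is not a small-order error that shrinks as $L$ grows. Concretely, with $b=2$, $L = M = 4$, $f_s = 3$: conditional on symbol $s$, the pre-renormalization state is uniform on $\{3,4,5\}$, and renormalization sends $3 \mapsto \{6,7\}$ while $4, 5$ stay put. The resulting state is then supported on $\{4,5,6,7\}$ with masses $(1/3, 1/3, 1/6, 1/6)$---not uniform. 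Consequently $r = x \bmod M$ is no longer uniform at the next step, so the next symbol does \emph{not} follow $D$. Making $L$ a larger multiple of $M$ does not help: the non-uniformity is caused by the range $[f_s L/M, \, b f_s L / M)$ straddling a power-of-$b$ boundary, and the fraction of states on the short side of that boundary does not vanish as $L \to \infty$. Your ``absorb the overshoot into $O(1/\varepsilon)$ slack'' fix would buy approximate sampling, but the theorem demands exact samples at every position, and the error here is distributional, not just an entropy overhead. The tANS alternative hits the same wall: the table permutes $\{0, \ldots, M-1\}$, but renormalization still moves mass unevenly across a power-of-$b$ threshold. If you instead try to preserve uniformity by never renormalizing to a fixed window and letting the range float, you need $a$ (the lower endpoint) to remain a multiple of $M$ across decodes, which forces $a$ to grow like $M^k$ after $k$ samples---exactly the exponential blow-up the paper's introduction rules out.

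The paper sidesteps this by \emph{separating} exactness from entropy recycling. Exactness comes from fresh-toss rejection sampling: $U_t$ uniform on $[0, 2^{j+k})$ is drawn from fresh (or recycled-and-guaranteed-fair) coins, and the interval lookup yields an exact $D$-sample or a rejection. ANS is used only to aggregate the leftover conditionally-uniform pieces $(N_t, S_t)$, which is an exact bijection (Lemma~\ref{lem:uniform}) with no renormalization, and the conversion back to fair tosses (Lemmas~\ref{lem:fair} and~\ref{lem:-3}) is done by parity extraction, losing a provably bounded $\leq 3$ bits per reset, amortized to $O(\varepsilon)$ over $1/\varepsilon$ samples. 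In other words, the paper never decodes $D$ directly out of an ANS state; it samples $D$ by rejection and only recycles uniforms, which is precisely what keeps the whole pipeline exact. Your proposal needs an analogous decoupling (or a proof that the state distribution, whatever it is, still produces exact $D$-marginals, which the counterexample above blocks) before it can deliver the theorem.
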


    While our time complexity is higher than previous works, it is only
    quadratic in the length of the description of $D$ and quasi-linear in
    $1/$the gap to entropy bound.  This is a cheap price to pay to avoid $n$
    and $m$ blowing up to $n^{O(1/\varepsilon)}$ and $m^{O(1/\varepsilon)}$,
    respectively.

    Finally, readers are referred to the concurrent work \cite{DrS25a} by
    Draper and Saad, which also breaks the ``$+2$'' barrier.  While their
    work generates a sequence of samples like ours do, the underlying
    distributions need not be the same and can depend on earlier samples.
    Their technique is similar to, but more dexterous than, ours.  See the
    remark below Conjecture~\ref{conj:-2} for more details.

    This paper is organized as follows.  Section~\ref{sec:background} reviews
    Knuth--Yao and other works.  Section~\ref{sec:asymmetric} then introduces
    asymmetric numeral system to recycle uniform distributions.  Finally,
    Section~\ref{sec:proof} proves Theorem~\ref{thm:main}.

\begin{table}
    \centering
    \caption{
        Recent works on simulating discrete distributions.  Costs are
        measured ``per sample''.
    }                                                      \label{tab:recent}
    \bigskip
    \begin{tabular}{cccc}
        \toprule
        Reference & Entropy & Space & Time \\
        \midrule
        Knuth--Yao \cite{KnY76}
        & $H(D) + 2$ & $O(m n \log n)$ & $O(H(D) + 1)$ \\
        Han--Hoshi \cite{HaH97}
        & $H(D) + 3$ & $O(m n \log n)$ & $O(H(D) + 1)$ \\
        Han--Hoshi \cite{HaH97}
        & $H(D) + 3$ & $O(n \log m)$ & $O((H(D) + 1) \log n)$ \\
        rejection-based \cite{SFR20}
        & $H(D) + 6$ & $O(n \log(n) \log(m))$ & $O(H(D) + 1)$ \\
        Draper--Saad \cite{DrS25}
        & $H(D) + 2$ & $O(n \log(n) \log(m))$ & $O(H(D) + 1)$ \\
        MichelangeRoll
        & $H(D) + \varepsilon$
        & $O((n + 1/\varepsilon) \log(m/\varepsilon))$
        & $O(\log(m/\varepsilon)^2 / \varepsilon)$ \\
        \midrule
        Concurrent work \cite{DrS25a}
        & $H(D) + \varepsilon$
        & $O((n + \log(m/\varepsilon)) \log(m))$
        & $O((n + \log(m/\varepsilon)) \log(m))$ \\
        \bottomrule
    \end{tabular}
\end{table}

\section{Knuth--Yao and Variants}                      \label{sec:background}

    From now on, $D = (p_1, p_2, \dotsc, p_n) \in [0, 1]^n$ is a distribution
    with the $p$'s being probability masses and $n$ being the number of
    possible outcomes.  Let the masses be rational numbers with a common
    denominator $m$, i.e., $mp_i$ is an integer for every $i \in \{1, 2,
    \dotsc, n\}$.

\subsection{Knuth--Yao}

    The well-known Knuth--Yao construction \cite{KnY76} samples discrete
    probability distributions using binary decision trees.  They use internal
    nodes to represent coin tosses, and each leaf corresponds to a possible
    outcome of $D$.  The tree is constructed in such a way that level $\ell$
    has $\lfloor 2^\ell p_i \rfloor \bmod 2 \in \{0, 1\}$ leaves that map to
    the $i$th outcome.  Note that, $\lfloor 2^\ell p_i \rfloor \bmod 2$ is
    the $\ell$th digit of the binary expansion of $p_i$.

    Such a tree is optimal in the following sense: Any algorithm that
    consumes random bits can be translated into a decision tree, infinite or
    not, and $p_i$ must coincide with the sum of $2^{-\text{level}}$ over all
    leaves that map to the $i$th outcome.  A decision tree that terminates as
    early as possible is the one that does not have two leaves at level $\ell
    + 1$ when it can have one leaf at level $\ell$.  So mirroring the binary
    expansion of $p_i$ is the one and only way to minimize entropy cost.  See
    Figure~\ref{fig:tree} for an example.

    Such a tree, however, wastes entropy and deviates away form Shannon's
    prediction of $H(D)$ bits per sample.  This is because, every time a leaf
    is mapped to an outcome symbol $X_t$, the level $L_t$ the leaf is at is
    forgotten.  If, instead, we remember the full history $(X_1, L_1)\,,\,
    (X_2, L_2)\,,\, \dotsc$ we can identify the leaves that were reached in
    the decision tree.  With that information we can recover the full history
    of coin tosses.  We therefore have $H(\text{coin tosses}) = H(\text{$X$'s
    and $L$'s}) \leq H(\text{$X$'s}) + H(\text{$L$'s})$.  In this viewpoint,
    ``$+2$'' is an overestimate of $H(L_t)$, as the following proposition
    implies.

    \begin{proposition}                                        \label{pro:+2}
        Let $\Lambda \subset \NN$ be a (possibly infinite) subset of natural
        numbers.  If a random variable $L \in \Lambda$ is such that $\PP\{L =
        \ell\} \propto 2^{-\ell}$ for all $\ell \in \Lambda$, then $H(L) \leq
        2$.  The equality holds iff $L$ follows the geometric distribution
        with success rate $1/2$ (which won't happen on a Knuth--Yao tree).
    \end{proposition}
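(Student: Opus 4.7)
The plan is to rewrite $H(L)$ in a computable form and bound it via a telescoping peel-off. Writing $Z := \sum_{\ell \in \Lambda} 2^{-\ell}$ so that $P_L(\ell) = 2^{-\ell}/Z$, the identity $-\log_2 P_L(\ell) = \ell + \log_2 Z$ immediately yields $H(L) = E[L] + \log_2 Z$; and $\Lambda$ being a set of distinct natural numbers gives the Kraft-like bound $Z \leq 1$.

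Enumerate $\Lambda = \{\ell_1 < \ell_2 < \cdots\}$ and define the tails $\Lambda^{(k)} := \{\ell_{k+1}, \ell_{k+2}, \ldots\}$, $Z^{(k)} := \sum_{\ell \in \Lambda^{(k)}} 2^{-\ell}$, and $L^{(k)}$ distributed on $\Lambda^{(k)}$ by the same $\propto 2^{-\ell}$ rule. Peeling off the shallowest level with conditional probability $a_k := 2^{-\ell_{k+1}}/Z^{(k)}$, the chain rule yields $H(L^{(k)}) = H_2(a_k) + (1 - a_k) H(L^{(k+1)})$, with $H_2$ denoting binary entropy. Since $1 - a_k = Z^{(k+1)}/Z^{(k)}$, multiplying by $Z^{(k)}$ and telescoping gives $Z \cdot H(L) = \sum_{k \geq 0} Z^{(k)} H_2(a_k)$. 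The crux is $H_2(a_k) \leq 2 a_k$: the structural bound $Z^{(k)} \leq \sum_{\ell \geq \ell_{k+1}} 2^{-\ell} = 2 \cdot 2^{-\ell_{k+1}}$ (which crucially uses $\Lambda \subseteq \NN$) forces $a_k \geq 1/2$, and on $[1/2, 1]$ one has $H_2(a) \leq 2a$ since both sides equal $1$ at $a = 1/2$ while $H_2$ decreases and $2a$ increases for $a > 1/2$. Substituting $Z^{(k)} H_2(a_k) \leq 2 \cdot 2^{-\ell_{k+1}}$ and summing over $k$ bounds $Z \cdot H(L)$ by $2 \sum_j 2^{-\ell_j} = 2Z$, so $H(L) \leq 2$.

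The main technical obstacle is the infinite-$\Lambda$ case. For finite $\Lambda$ the peeling terminates at $\Lambda^{(|\Lambda|)} = \emptyset$; for infinite $\Lambda$ I would pass to truncations $\Lambda_N := \{\ell_1, \ldots, \ell_N\}$, apply the finite bound to obtain $H(L_{\Lambda_N}) \leq 2$ for each $N$, and let $N \to \infty$, using monotone convergence of $Z_{\Lambda_N}$ and $\sum_{k \leq N} \ell_k 2^{-\ell_k}$ to conclude $H(L_{\Lambda_N}) = E[L_{\Lambda_N}] + \log_2 Z_{\Lambda_N} \to H(L)$. For the equality case, chasing $H_2(a_k) = 2 a_k$ backward forces $a_k = 1/2$ and $Z^{(k)} = 2^{1 - \ell_{k+1}}$ at every step, so $\ell_{k+1} = \ell_1 + k$; thus $\Lambda$ is a tail $\{\ell_1, \ell_1 + 1, \ldots\}$ of $\NN$ and $L - \ell_1 + 1$ is exactly geometric with rate $1/2$. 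Such a $\Lambda$ does not arise on a Knuth--Yao tree for finite-support $D$ with rational probabilities, since per-outcome level sets there mirror the $1$-positions of the finite-length binary expansions of the $p_i$.
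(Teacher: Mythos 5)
Your proof is correct and takes essentially the same approach as the paper: both peel off the shallowest level of $\Lambda$ and account recursively, the paper via $H(L) \leq H(B) + \PP\{B=1\}\,H(L \mid B=1)$ together with $H(B) \leq 1$, $\PP\{B=1\} \leq 1/2$, and a fixed-point argument, while you unroll the same recursion into the telescoped identity $Z\,H(L) = \sum_k Z^{(k)} H_2(a_k)$ and close it with $H_2(a) \leq 2a$ on $[1/2,1]$ (a merged form of the paper's two separate bounds). Your version supplies rigor the paper elides — the truncation/monotone-convergence step for infinite $\Lambda$ and an explicit chase of the equality case — at the cost of heavier bookkeeping; the only small slip is the closing remark, since rational $p_i$ generally have infinite periodic (not finite-length) binary expansions, though the conclusion survives because $\lfloor 2^\ell p_i\rfloor \bmod 2$ always yields the terminating expansion of a dyadic $p_i$, so the tail $\{\ell_1, \ell_1+1, \dotsc\}$ never arises.
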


    \begin{proof}
        Let $\lambda$ be the least element of $\Lambda$.  Let $B$ be the
        indicator of $L > \lambda$.  Then $B$ follows a Bernoulli
        distribution with mean $< 1/2$.  Conditioning on $B = 1$, we observe
        that $L - \lambda$ is again a random variable whose pmf is
        proportional to $2^{-\ell}$ for all $\ell$ in the support.  So we
        obtain a recursive upper bound on $H(L)$:
        \begin{equation}
            H(L) \leq H(B) + \PP\{B = 1\} \cdot \eqref{recur}   \label{recur}
        \end{equation}
        With $H(B) \leq 1$ and $\PP\{B = 1\} \leq 1/2$, the fixed point is
        found to be $2$, proving $H(L) \leq 2$.
    \end{proof}

    Another problem with the tree is its high memory footprint even when the
    denominator is fairly small.  Consider an example $D \coloneqq (1/947,
    946/947)$.  It takes ten bits to store $947$.  But the binary expansion
    of $1/947$ does not repeat itself in the first $946$ digits.  This
    implies that any optimal decision tree cannot repeat itself before level
    $946$, and so everything in between needs to be stored (or efficiently
    computed on demand).

    In fact, a straightforward counting argument \cite{ErM99} shows that, for
    almost all primes $m$, the binary expansion of $1/m$ does not repeat
    itself in the first $\sqrt{m} / \ln m$ bits.  Therefore, there is almost
    always an exponential gap between storing $m$ versus storing $1/m$.

    On the bright side, at least we know that the tree will eventually repeat
    itself in $m$ or fewer levels.  We can prune the repeating part of the
    tree and place a ``goto'' arrow that points to the root of the repeated
    part, as shown in Figure~\ref{fig:goto}.  Through this, any tree
    constructed out of a rational distribution can fit into a finite amount
    of memory.  While describing $n$ probability masses requires $O(n \log
    m)$ memory, the tree's memory might grow like $mn \log n$.  (Each level
    has $O(n)$  nodes and leaves; each edge needs $O(\log n)$ bits.)

\begin{figure}
    \centering
    \begin{tikzpicture}
        \def~#1{\hbox to 0.5cm{\hfil#1\hfil}}
        \draw [green!30!gray] (0, 1.5) node [right]
            {$\log_8(\sec20^\circ)=$~0~.~0~0~0~0~0~1~1~1~1~0~1~0~1~0~0~0~0};
        \draw [green!30!gray] (0, 1) node [right]
            {$\log_8(\sec40^\circ)=$~0~.~0~0~1~0~0~0~0~0~1~1~0~0~1~1~1~1~1};
        \draw [green!30!gray] (0, 0.5) node [right]
            {$\log_8(\sec80^\circ)=$~0~.~1~1~0~1~0~1~1~1~1~0~0~0~1~0~0~0~0};
        \tikzset{shift={(3.3, -0.1)}}
        \filldraw [brown]
            (0, 0) circle (1pt)
            (0, 0) -- (0.5, 0.5)
            (0, 0) -- (0.5, 0)
            (0.5, 0) circle (1pt)
            (0.5, 0) -- (1, 0.5)
            (0.5, 0) -- (1, 0)
            (1, 0) circle (1pt)
            (1, 0) -- (1.5, 1)
            (1, 0) -- (1.5, 0)
            (1.5, 0) circle (1pt)
            (1.5, 0) -- (2, 0.5)
            (1.5, 0) -- (2, 0)
            (2, 0) circle (1pt)
            (2, 0) -- (2.5, 0)
            (2, 0) -- (2.5, -0.5)
            (2.5, 0) circle (1pt)
            (2.5, -0.5) circle (1pt)
            (2.5, 0) -- (3, 1.5)
            (2.5, 0) -- (3, 0.5)
            (2.5, -0.5) -- (3, 0)
            (2.5, -0.5) -- (3, -0.5)
            (3, 0) circle (1pt)
            (3, -0.5) circle (1pt)
            (3, 0) -- (3.5, 1.5)
            (3, 0) -- (3.5, 0.5)
            (3, -0.5) -- (3.5, 0)
            (3, -0.5) -- (3.5, -0.5)
            (3.5, 0) circle (1pt)
            (3.5, -0.5) circle (1pt)
            (3.5, 0) -- (4, 1.5)
            (3.5, 0) -- (4, 0.5)
            (3.5, -0.5) -- (4, 0)
            (3.5, -0.5) -- (4, -0.5)
            (4, 0) circle (1pt)
            (4, -0.5) circle (1pt)
            (4, 0) -- (4.5, 1.5)
            (4, 0) -- (4.5, 1)
            (4, -0.5) -- (4.5, 0.5)
            (4, -0.5) -- (4.5, 0)
            (4.5, 0) circle (1pt)
            (4.5, 0) -- (5, 1)
            (4.5, 0) -- (5, 0)
            (5, 0) circle (1pt)
            (5, 0) -- (5.5, 1.5)
            (5, 0) -- (5.5, 0)
            (5.5, 0) circle (1pt)
            (5.5, 0) -- (6, 0)
            (5.5, 0) -- (6, -0.5)
            (6, 0) circle (1pt)
            (6, -0.5) circle (1pt)
            (6, 0) -- (6.5, 1.5)
            (6, 0) -- (6.5, 1)
            (6, -0.5) -- (6.5, 0.5)
            (6, -0.5) -- (6.5, 0)
            (6.5, 0) circle (1pt)
            (6.5, 0) -- (7, 1)
            (6.5, 0) -- (7, 0)
            (7, 0) circle (1pt)
            (7, 0) -- (7.5, 1.)
            (7, 0) -- (7.5, 0)
            (7.5, 0) circle (1pt)
            (7.5, 0) -- (8, 1)
            (7.5, 0) -- (8, 0)
            (8, 0) circle (1pt)
            (8, 0) -- (8.5, 1)
            (8, 0) -- (8.5, 0)
            (8.5, 0) circle (1pt)
            (8.5, 0) -- (8.9, 0)
        ;
        \fill [white, path fading=west] (6.5, 2) rectangle (9, -0.5);
        \fill [white, path fading=west] (7, 2) rectangle (9, -0.5);
        \fill [white, path fading=west] (7.5, 2) rectangle (9, -0.5);
        \fill [white, path fading=west] (8, 2) rectangle (9, -0.5);
        \fill [white, path fading=west] (8.5, 2) rectangle (9, -0.5);
    \end{tikzpicture}\nobreak\hskip -1 cm $ $
    \caption{
        Knuth--Yao's optimal tree in three steps: Step one: Take numbers that
        sum to $1$.  Step two: Compute their binary expansions.  Step three:
        Use the $1$'s in the binary expansions as leaves.
    }                                                        \label{fig:tree}
\end{figure}

\subsection{Variants for uniform distributions}

    Many works have since then attempted to simplify the tree generation
    process or work with implicit trees.  For instance, Lumbroso \cite{Lum13}
    (especially Theorem~3 therein) and Huber--Vargas \cite{HuV24} considered
    the useful special case where $D$ is uniform.  This reduces space and
    time complexity as all $p$'s are now the same.
    
    More concretely, if $D$ is the uniform distribution on $\{1, 2, \dotsc,
    n\}$, then $D^b$ is a uniform distribution on $\{1, 2, \dotsc, n^b\}$.
    From here one can simply apply Knuth--Yao, which boils down to expressing
    $1/n^b$ in binary.  Now compare this to our claim in the introduction
    that the space complexity grows exponentially in $b$: There are indeed
    $n^b$ possible outcomes and $n^b$ probability masses to be stored; but
    they are all the same, so the space complexity is only $O(\log(n^b)) =
    O(b \log n)$.

\begin{figure}
    \centering
    \begin{tikzpicture}
        \def~#1{\hbox to 0.5cm{\hfil#1\hfil}}
        \draw [green!30!gray] (0, 1.5)
            node [right] {$1/2 =$~0~.~1};
        \draw [green!30!gray] (0, 1)
            node [right] {$1/3 =$~0~.~0~1~0~1~0~1~0~1~0~1~0~1~0~1};
        \draw [green!30!gray] (0, 0.5)
            node [right]{$1/6 =$~0~.~0~0~1~0~1~0~1~0~1~0~1~0~1~0};
        \tikzset{shift={(1.8, -0.1)}}
        \filldraw [brown]
            (0, 0) circle (1pt)
            (0, 0) -- (0.5, 1.5)
            (0, 0) -- (0.5, 0)
            (0.5, 0) circle (1pt)
            (0.5, 0) -- (1, 1)
            (0.5, 0) -- (1, 0)
            (1, 0) circle (1pt)
            (1, 0) -- (1.5, 0.5)
            (1, 0) -- (1.5, 0)
            (1.5, 0) circle (1pt)
            (1.5, 0) -- (2, 1)
            (1.5, 0) -- (2, 0)
            (2, 0) circle (1pt)
            (2, 0) -- (2.5, 0.5)
            (2, 0) -- (2.5, 0)
            (2.5, 0) circle (1pt)
            (2.5, 0) -- (3, 1)
            (2.5, 0) -- (3, 0)
            (3, 0) circle (1pt)
            (3, 0) -- (3.5, 0.5)
            (3, 0) -- (3.5, 0)
            (3.5, 0) circle (1pt)
            (3.5, 0) -- (4, 1)
            (3.5, 0) -- (4, 0)
            (4, 0) circle (1pt)
            (4, 0) -- (4.5, 0.5)
            (4, 0) -- (4.5, 0)
            (4.5, 0) circle (1pt)
            (4.5, 0) -- (5, 1)
            (4.5, 0) -- (5, 0)
            (5, 0) circle (1pt)
            (5, 0) -- (5.5, 0.5)
            (5, 0) -- (5.5, 0)
            (5.5, 0) circle (1pt)
            (5.5, 0) -- (6, 1)
            (5.5, 0) -- (6, 0)
            (6, 0) circle (1pt)
            (6, 0) -- (6.5, 0.5)
            (6, 0) -- (6.5, 0)
            (6.5, 0) circle (1pt)
            (6.5, 0) -- (7, 1)
            (6.5, 0) -- (7, 0)
            (7, 0) circle (1pt)
            (7, 0) -- (7.4, 0)
        ;
        \draw [cyan!80!black]
            (0.3, -0.2) -- (0.3, -0.3) -- (1.2, -0.3)
            node [below left] {repeated}
            node [below right] {\ repeating ... repeating ... repeating}
            -- (1.2, -0.2)
            (1.3, -0.2) -- (1.3, -0.3) -- (2.2, -0.3) -- (2.2, -0.2)
            (2.3, -0.2) -- (2.3, -0.3) -- (3.2, -0.3) -- (3.2, -0.2)
            (3.3, -0.2) -- (3.3, -0.3) -- (4.2, -0.3) -- (4.2, -0.2)
            (4.3, -0.2) -- (4.3, -0.3) -- (5.2, -0.3) -- (5.2, -0.2)
            (5.3, -0.2) -- (5.3, -0.3) -- (6.2, -0.3) -- (6.2, -0.2)
            (6.3, -0.2) -- (6.3, -0.3) -- (7.2, -0.3) -- (7.2, -0.2)

        ;
        \fill [white, path fading=west] (5, 1.5) rectangle (7.5, -1);
        \fill [white, path fading=west] (5.5, 1.5) rectangle (7.5, -1);
        \fill [white, path fading=west] (6, 1.5) rectangle (7.5, -1);
        \fill [white, path fading=west] (6.5, 1.5) rectangle (7.5, -1);
        \fill [white, path fading=west] (7, 1.5) rectangle (7.5, -1);
    \end{tikzpicture}
    \hfil
    \begin{tikzpicture}
        \filldraw [brown]
            (0, 0) circle (1pt)
            (0, 0) -- (0.5, 1.5) + (0.08, 0.20) node [green!30!gray] {1/2}
            (0, 0) -- (0.5, 0)
            (0.5, 0) circle (1pt)
            (0.5, 0) -- (1, 1) + (0.08, 0.20) node [green!30!gray] {1/3}
            (0.5, 0) -- (1, 0)
            (1, 0) circle (1pt)
            (1, 0) -- (1.5, 0.5) + (0.08, 0.20) node [green!30!gray] {1/6}
            (1, 0) -- (1.5, 0)
            (1.5, 0) circle (1pt)
        ;
        \draw [cyan!80!black, line width=2pt, ->, shorten >=2pt]
            (1.5, 0) to [bend left=70, looseness=2] (0.5, 0)
        ;
    \end{tikzpicture}
    \caption{
        Left: A tree repeats itself.  Right: The repeating part is simplified
        by a ``goto'' arrow.
    }                                                        \label{fig:goto}
\end{figure}

\subsection{Variants for nonuniform distributions}

    For generic distributions, Han and Hoshi \cite{HaH97} used the inverse
    function of cdf to implement an easier-to-understand decision tree.
    First, the unit interval $[0, 1]$ is partitioned into $n$ subintervals,
    each corresponding to an $i \in \{1, \dotsc, n\}$ and having length
    $p_i$.  The algorithm then generates a random number in $[0, 1]$ by
    revealing its binary expansion bit by bit.  So, at any finite time, the
    random number is \emph{fuzzy}; it can be understood as an interval $[0.r,
    0.r\overline1]$, where $r$ is the bits revealed so far.  Han and Hoshi's
    algorithm will accept $[0.r, 0.r\overline1]$ if it falls completely
    within the $i$th subinterval for some $i \in \{1, \dotsc, n\}$.  When
    that happens, the next sample is $i$.  See Figure~\ref{fig:interval} for
    an illustration.

    From the description one can see that, at each level, there could be
    zero, one, or two fuzzy numbers that fall within the $i$th subinterval,
    but never three.  This is because any three consecutive fuzzy numbers
    must contain two that merge into a fuzzier number that should have been
    accepted in the previous level.  Note that this is slightly worse than
    Knuth--Yao, wherein each level has at most one leaf that maps to the
    $i$th outcome.  With computations like the following proposition, Han and
    Hoshi's algorithm is shown to use $H(D) + 3$ tosses per sample.

    \begin{proposition}                                        \label{pro:+3}
        Let $\Gamma \subset \NN \times \{\spadesuit, \clubsuit\}$ be a
        (possibly infinite) subset.  If random variables $(L, S) \in \Gamma$
        are such that $\PP\{(L, S) = (\ell, s)\} \propto 2^{-\ell}$ for all
        $(\ell, s) \in \Gamma$, then $H(L, S) \leq 3$.
    \end{proposition}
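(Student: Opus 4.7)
The plan is to decompose via the chain rule $H(L, S) = H(S) + H(L \mid S)$ and bound each summand separately using Proposition~\ref{pro:+2}.

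For the first summand, $S$ is supported on the two-element set $\{\spadesuit, \clubsuit\}$, so $H(S) \leq \log 2 = 1$ by the usual uniform bound on entropy. No further work is needed here.

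The second summand is the crux. The key observation is that the conditional distribution of $L$ given $S = s$ inherits the $\propto 2^{-\ell}$ form, bringing it directly within the scope of Proposition~\ref{pro:+2}. Concretely, writing $Z = \sum_{(\ell,s)\in\Gamma} 2^{-\ell}$ and, for each suit $s$, $Z_s = \sum_{\ell : (\ell,s) \in \Gamma} 2^{-\ell}$, one checks that
\[
    \PP\{L = \ell \mid S = s\} = \frac{2^{-\ell}/Z}{Z_s/Z} = \frac{2^{-\ell}}{Z_s},
\]
which is a pmf of the exact shape treated by Proposition~\ref{pro:+2} on the slice $\{\ell : (\ell, s) \in \Gamma\} \subset \NN$. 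Hence $H(L \mid S = s) \leq 2$ for each $s$, and averaging over $s$ gives $H(L \mid S) \leq 2$.

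Combining the two estimates yields $H(L, S) \leq 1 + 2 = 3$, which is the claim. I do not foresee any real obstacle; the lone edge case, when only one of the two suits appears in $\Gamma$, makes $H(S) = 0$ and collapses the argument back to a direct invocation of Proposition~\ref{pro:+2}. It is perhaps worth remarking that the alternative decomposition $H(L, S) = H(L) + H(S \mid L)$ is less convenient, because the marginal of $L$ has mass $\propto c_\ell\,2^{-\ell}$ with $c_\ell \in \{0,1,2\}$ rather than $\propto 2^{-\ell}$, so Proposition~\ref{pro:+2} does not apply to $L$ directly; conditioning on $S$ first is what cleanly absorbs the factor of two.
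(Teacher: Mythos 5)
Your proof is correct, but it takes a genuinely different and cleaner route than the paper's. The paper generalizes the recursive/fixed-point argument of Proposition~\ref{pro:+2} directly to the two-suit setting: letting $\lambda$ be the smallest level appearing in $\Gamma$, it peels off that level, bounds the binary entropy of ``at level $\lambda$ or not'' and the residual entropy at level $\lambda$, and obtains a self-referential inequality of the form $H(L,S) \leq \tfrac32 + \tfrac12 \max\{\dotsc\}$ (when both suits appear at level $\lambda$) or $H(L,S) \leq 1 + \tfrac23 \max\{\dotsc\}$ (when only one does), whose fixed point is $3$. You instead apply the chain rule $H(L,S) = H(S) + H(L\mid S)$, bound $H(S) \leq 1$ trivially, and observe that each conditional slice $L\mid S=s$ falls squarely under Proposition~\ref{pro:+2}, giving $H(L\mid S) \leq 2$. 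This is more modular: it reuses Proposition~\ref{pro:+2} as a black box rather than re-deriving a recursion from scratch, and it avoids the delicate case analysis over which suits are present at the minimal level. Your closing remark about why conditioning on $S$ (rather than on $L$) is the right order, since the marginal of $L$ has weights $c_\ell 2^{-\ell}$ with $c_\ell \in \{0,1,2\}$ and so does not match the hypothesis of Proposition~\ref{pro:+2}, is exactly the right thing to check and correctly identifies the pivot of the argument.
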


    \begin{proof}
        Generalize the idea of Proposition~\ref{pro:+2}: Let $\lambda$ be
        the least possible $\ell$ among all $(\ell, s) \in \Gamma$.  Then
        \begin{equation}
            H(L, S) \leq \frac32 + \frac12 \cdot
            \max\{\eqref{recur3}, \eqref{recur4}\}             \label{recur3}
        \end{equation}
        if both
        $(\lambda, \spadesuit)$ and $(\lambda, \clubsuit)$ are in $\Gamma$.
        if only one of them is in $\Gamma$, then
        \begin{equation}
            H(L, S) \leq 1 + \frac23 \cdot
            \max\{\eqref{recur3}, \eqref{recur4}\}             \label{recur4}
        \end{equation}
        The fixed point is $3$.
    \end{proof}

\begin{figure}
    \centering
    \begin{tikzpicture}
        \fill [cyan!20!white] (0, 12/3) rectangle (9.5, 8/3);
        \fill [yellow!50!white] (0, 8/3) rectangle (9.5, 4/3);
        \fill [magenta!20!white] (0, 4/3) rectangle (9.5, 0);
        \draw (9.5, 12/3) node [right] {$1$};
        \draw (9.5, 8/3) node [right] {$2/3$};
        \draw (9.5, 4/3) node [right] {$1/3$};
        \draw (9.5, 0) node [right] {$0$};
        \draw
            (1, 4) rectangle (1.5, 0)
            (2, 4) rectangle (2.5, 2)
            (2, 2) rectangle (2.5, 0)
            (3, 4) rectangle (3.5, 3)
            (3, 3) rectangle (3.5, 2)
            (3, 2) rectangle (3.5, 1)
            (3, 1) rectangle (3.5, 0)
            (4, 3) rectangle (4.5, 2.5)
            (4, 2.5) rectangle (4.5, 2)
            (4, 2) rectangle (4.5, 1.5)
            (4, 1.5) rectangle (4.5, 1)
            (5, 3) rectangle (5.5, 2.75)
            (5, 2.75) rectangle (5.5, 2.5)
            (5, 1.5) rectangle (5.5, 1.25)
            (5, 1.25) rectangle (5.5, 1)
            (6.5, 2.75) rectangle (6, 2.625)
            (6.5, 2.625) rectangle (6, 2.5)
            (6.5, 1.5) rectangle (6, 1.375)
            (6.5, 1.375) rectangle (6, 1.25)
            (7.5, 2.75) rectangle (7, 2.6875)
            (7.5, 2.6875) rectangle (7, 2.625)
            (7.5, 1.375) rectangle (7, 1.3125)
            (7.5, 1.3125) rectangle (7, 1.25)
            (8.5, 2.6875) rectangle (8, 2.65625)
            (8.5, 2.65625) rectangle (8, 2.625)
            (8.5, 1.375) rectangle (8, 1.34375)
            (8.5, 1.34375) rectangle (8, 1.3125)
        ;
    \end{tikzpicture}
    \caption{
        Han and Hoshi's construction with $D \coloneqq (1/3, 1/3, 1/3)$ as a
        running example.  The unit interval $[0, 1]$ is partitioned into $3$
        subintervals of length $1/3$, represented by the three colored
        strips.  The process of revealing the fuzzy random number is
        represented by rectangles with progressively halving heights.  A
        rectangle is accepted if it is monochromatic.  Accepted rectangles
        will not be divided further.
    }                                                    \label{fig:interval}
\end{figure}
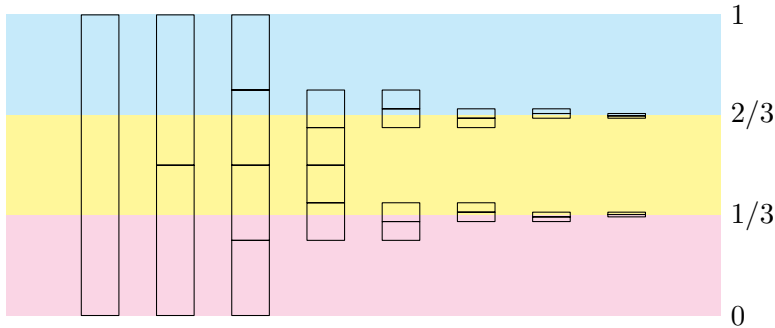

\subsection{Acceptance--rejection approach}

    Earlier this year, Draper and Saad \cite{DrS25} proposed a
    low-complexity tree based on an earlier work by Saad, Freer, Rinard, and
    Mansinghka \cite{SFR20}.  We first introduce the base work.

    Saad, Freer, Rinard, and Mansinghka \cite{SFR20} borrowed ideas from the
    acceptance--rejection framework.  They consider the least power $2^k$
    that is $\geq m$, and generate a uniform sample $U \in \{0, \dotsc, 2^k -
    1\}$.  They then check if $U$ falls in any of the intervals
        \begin{itemize}
            \item $\bigl[0, mp_1\bigr)$,
            \item $\bigl[mp_1, m(p_1 + p_2)\bigr)$,
            \item $\bigl[m (p_1 + p_2), m (p_1 + p_2 + p_3)\bigr)$,
                    \\[-25pt] \hbox{} \nopagebreak
            \item ~ \\[-25pt] \hbox{} \nopagebreak
            \item ~ \\[-25pt] \hbox{} \nopagebreak
            \item ~ \\[-25pt] \hbox{} \nopagebreak
            \item $\bigl[m (1 - p_n - p_{n-1}), m (1 - p_n)\bigr)$
            \item $\bigl[m (1 - p_n), m\bigr)$.
        \end{itemize}
    If so, the index of the subinterval is output as the next sample.  If
    not, we say that $U$ is \emph{rejected}.  When done economically, this is
    equivalent to constructing a tree with a ``dyadized'' distribution in
    mind $\ddot D \coloneqq (mp_1/2^k, mp_2/2^k, \dotsc, mp_n/2^k, 1 -
    m/2^k)$ and redirecting the $(n+1)$th outcome back to the root.  See
    Figure~\ref{fig:reject} for a comparison.  A tree like that is extremely
    easy to store: only $O(n \log(n) \log(m))$ memory is needed.

    This approach, however, wastes entropy due to higher density of gotos.
    For instance, if $m$ is $2^{k-1} + 1$, then the rejection probability is
    $(2^{k-1} - 1) / 2^k \approx 1/2$; the ratio between rejection and
    acceptance is about $1 : 1$.  We can write down a sequence
    ``ARRARAAR...'' where R and A mean rejection and acceptance,
    respectively.  This sequence carries about $2$ bits of information per
    occurrence of A, so already this wastes $2$ bits of entropy per sample.
    Plus, when $U$ is rejected, the level of the leaf that leads to the
    rejection is forgotten, which is another $2$ bits of entropy wasted (for
    the same reason behind Proposition~\ref{pro:+2}).  Overall, this approach
    wastes $4$ more bits than Knuth--Yao, totaling to $H(D) + 6$ tosses per
    sample.

    The follow-up work by Draper and Saad \cite{DrS25} addressed the loss of
    rejection entropy.  The authors noticed that, for worst-case scenarios
    such as $m = 2^{k-1} + 1$, multiples like $15m$ and $31m$ are slightly
    smaller than $2^{k+3}$ and $2^{k+4}$, respectively.  If we use $15m$ or
    $31m$ as the denominator of the $p$'s, the rejection rate can be
    controlled under $1/16$ or $1/32$.  When the rejection rate is low, the
    entropy of the A--R sequence is close to $0$, and the leaf that leads to
    rejection is not important because it is rarely visited.

    For general $m$, one can see that a multiple like $\lfloor 4^k / m
    \rfloor \cdot m$ lies between $4^k - 2^k$ and $4^k$ and saves a great
    amount of entropy.  As the multiple increases, Draper and Saad's tree
    converges to that of Knuth and Yao.  Amplifying the denominator thus
    provides a way to interpolate between ``$+2$'' and ``$+6$''.  With
    careful computations, their work \cite{DrS25} showed that $\lfloor 4^k /
    m \rfloor \cdot m$ limits wasted entropy to $+2$ bits per sample, and
    conclude that $O(n \log(n) \log(m))$ memory is possible.

\section{Asymmetric Numeral to Recycle Uniform}        \label{sec:asymmetric}

    Around Proposition~\ref{pro:+2} in the previous section, we saw that the
    Knuth--Yao tree wastes $2$ bits of entropy for forgetting the level $L_t$
    of the leaf that leads to the $t$th sample.  In this section, we
    demonstrate how to recycle non-dyadic uniform distributions.  We begin
    with borrowing an existing coding tool.

\subsection{Asymmetric numeral system}

    The \emph{asymmetric numeral system} (ANS) \cite{Dud14} is a coding
    scheme that compresses a sequence of random variables $S_1, S_2, \dotsc,
    S_t$ in to a single integer $A_t$ by applying a very intuitive pairing
    function recursively.  In this subsection, $\NN$ is the set of
    nonnegative integers, containing $0$.

    The scheme is as follows: Suppose that $(S_1, S_2, \dotsc, S_{t-1}) \in
    \NN^{t-1}$ are already mapped to $A_{t-1} \in \NN$.  We now want to push
    $S_t \in \{0, 1, \dotsc, n - 1\}$ into $A_{t-1}$ to form $A_t$.  So we
    look up and denote the pmf of $S_t$ by $f$.  Next we partition
    nonnegative integers
    \begin{equation}
        \NN =
        \NN[0] \cup \NN[1] \cup \dotsb \cup \NN[n - 1]      \label{partition}
    \end{equation}
    such that the density of $\NN[s]$ is about $f(s)$.  Finally, let $A_t$ be
    the $A_{t-1}$th element of $\NN[S_t]$.

    The art lies in the choice of the partition \eqref{partition}.  For
    instance, one can draft a partition $\NN[s] \leftarrow
    \{\lfloor1/f(s)\rfloor, \lfloor2/f(s)\rfloor, \lfloor3/f(s)\rfloor,
    \dotsc\}$ and resolve collisions locally.  This way, $A_t / A_{t-1}$ will
    be approximately $1 / f(S_t)$, so the number of digits $A_t$ has
    increases by $f(S_t) \log(1 / f(S_t))$.  The expectation of the increment
    is $H(f)$, so ANS is asymptotically optimal.

    MichelangeRoll only needs to recycle uniform distributions, as
    Section~\ref{sec:proof} will explain.  So its partition \eqref{partition}
    is as simple as $\NN[s] \coloneqq \{ an + s \,|\, a \in \NN \}$, and its
    encoding scheme is as simple as $A_t \coloneqq A_{t-1}n + S_t$.  What is
    good about ANS is its ability to recycle uniform distributions with
    distinct $n$'s, as discussed in the next subsection.

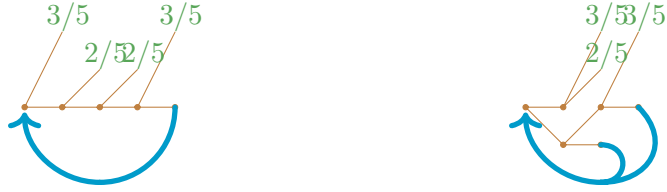
\begin{figure}
    \centering
    \begin{tikzpicture}
        \filldraw [brown]
            (0, 0) circle (1pt) 
            (0, 0) -- (0.5, 1) + (0.08, 0.20) node [green!30!gray] {3/5}
            (0, 0) -- (0.5, 0)
            (0.5, 0) circle (1pt) 
            (0.5, 0) -- (1, 0.5) + (0.08, 0.20) node [green!30!gray] {2/5}
            (0.5, 0) -- (1, 0)
            (1, 0) circle (1pt) 
            (1, 0) -- (1.5, 0.5) + (0.08, 0.20) node [green!30!gray] {2/5}
            (1, 0) -- (1.5, 0)
            (1.5, 0) circle (1pt) 
            (1.5, 0) -- (2, 1) + (0.08, 0.20) node [green!30!gray] {3/5}
            (1.5, 0) -- (2, 0)
            (2, 0) circle (1pt) 
        ;
        \draw [cyan!80!black, line width=2pt, ->, shorten >=2pt]
            (2, 0) to [out=-90, in=0, looseness=1] (1, -1)
            (1, -1) to [out=180, in=-90, looseness=1] (0, 0)
        ;
    \end{tikzpicture}
    \hfil
    \begin{tikzpicture}
        \filldraw [brown]
            (0, 0) circle (1pt)
            (0, 0) -- (0.5, 0)
            (0, 0) -- (0.5, -0.5)
            (0.5, 0) circle (1pt)
            (0.5, -0.5) circle (1pt)
            (0.5, 0) -- (1, 1) + (0.08, 0.20) node [green!30!gray] {3/5}
            (0.5, 0) -- (1, 0.5) + (0.08, 0.20) node [green!30!gray] {2/5}
            (0.5, -0.5) -- (1, 0)
            (0.5, -0.5) -- (1, -0.5)
            (1, 0) circle (1pt)
            (1, -0.5) circle (1pt)
            (1, 0) -- (1.5, 1) + (0.08, 0.20) node [green!30!gray] {3/5}
            (1, 0) -- (1.5, 0)
            (1.5, 0) circle (1pt)
        ;
        \draw [cyan!80!black, line width=2pt, ->, shorten >=2pt]
            (1.5, 0) to [out=-45, in=0, looseness=1.5] (1, -1)
            (1, -0.5) to [out=0, in=0, looseness=2] (1, -1)
            (1, -1) to [out=180, in=-90, looseness=1] (0, 0)
        ;
    \end{tikzpicture}
    \caption{
        The goal is to generate $D \coloneqq (3/5, 2/5)$.  Left: Follow
        Knuth--Yao's recipe and wait patiently for repetition.  Right:
        Generate $\ddot D \coloneqq (3/8, 2/8, 3/8)$ and reject the third
        outcome.
    }                                                      \label{fig:reject}
\end{figure}

\subsection{Aggregate uniform distributions}

    Suppose that $N_1, S_1, N_2, S_2, N_3, S_3, \dotsc$ is a sequence of
    integer random variables.  $N_t$ is independent of what comes before it.
    $S_t \in \{0, 1, \dotsc, N_t - 1\}$ follows the uniform distribution
    conditioning on what comes before $N_t$.  The goal of this subsection is
    to collect entropy from the $S$'s using ANS.

    \begin{lemma}                                         \label{lem:uniform}
        Define $A_0 \coloneqq 0$ and $A_t \coloneqq A_{t-1}N_t + S_t$.
        Conditioning on $N_1, N_2, \dotsc, N_t$, the aggregated integer $A_t
        \in \{0, 1, \dotsc, N_1N_2\dotsm N_t - 1\}$ is uniform.
    \end{lemma}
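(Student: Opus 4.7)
\medskip

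The plan is induction on $t$. The base case $t = 0$ is trivial since $A_0 = 0$ lives in the singleton $\{0\}$, which is uniform by convention.

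For the inductive step, suppose that conditioning on $N_1, \dotsc, N_{t-1}$, the integer $A_{t-1}$ is uniform on $\{0, 1, \dotsc, N_1 \dotsm N_{t-1} - 1\}$. First I would verify two conditional-independence facts. Since $N_t$ is independent of everything preceding it, including $A_{t-1}$, further conditioning on $N_t$ does not alter the distribution of $A_{t-1}$; thus $A_{t-1}$ remains uniform on the same range given $N_1, \dotsc, N_t$. And since $S_t$ given the past (including $N_t$) is uniform on $\{0, \dotsc, N_t - 1\}$ regardless of the value of $A_{t-1}$, the pair $(A_{t-1}, S_t)$ conditioned on $N_1, \dotsc, N_t$ is uniform on the product set $\{0, \dotsc, N_1 \dotsm N_{t-1} - 1\} \times \{0, \dotsc, N_t - 1\}$.

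Next, the key combinatorial step: the map $(a, s) \mapsto a N_t + s$ is a bijection from this product set onto $\{0, 1, \dotsc, N_1 \dotsm N_t - 1\}$. This is essentially a mixed-radix digit expansion, and can be checked by observing that each target integer $A \in \{0, \dotsc, N_1 \dotsm N_t - 1\}$ has a unique quotient--remainder decomposition $A = (A \operatorname{div} N_t) \cdot N_t + (A \bmod N_t)$ with quotient in the correct range. Since uniform measures push forward through bijections to uniform measures, $A_t = A_{t-1} N_t + S_t$ is uniform on $\{0, \dotsc, N_1 \dotsm N_t - 1\}$, closing the induction.

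I do not expect any serious obstacle; the only delicate point is bookkeeping the conditional-independence argument cleanly so that the reader sees why the $\sigma$-algebra generated by $N_1, \dotsc, N_t$ is small enough to leave $S_t$ uniform and $A_{t-1}$ uniform, yet large enough to make the two jointly independent. Once that is laid out, the bijection finishes the proof in one line.
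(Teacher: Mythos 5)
Your proposal is correct and follows essentially the same inductive argument as the paper: condition on $N_1,\dotsc,N_t$, note that $A_{t-1}$ stays uniform after further conditioning on $N_t$, that $S_t$ is conditionally uniform, and that the map $(a,s) \mapsto aN_t + s$ is a bijection onto $\{0,\dotsc,N_1\dotsm N_t - 1\}$. If anything, you are slightly more explicit than the paper in stating that $(A_{t-1},S_t)$ is \emph{jointly} uniform on the product set (i.e.\ conditionally independent given $N_1,\dotsc,N_t$), a point the paper's two-bullet argument leaves implicit before invoking the bijection.
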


    \begin{proof}
        Apply induction on $t$.  The base case is trivial.
        Suppose, for the induction step, that $A_{t-1}$ is uniform in $\{0,
        1, \dotsc, N_1N_2\dotsm N_{t-1} - 1\}$ conditioning on $N_1, N_2,
        \dotsc, N_{t-1}$.  We want two things:
        \begin{itemize}
            \item $A_{t-1}$ is uniform conditioning on $N_1, N_2, \dotsc,
                N_t$ (not just up to $N_{t-1}$).
            \item $S_t$ is uniform conditioning on $N_1, N_2, \dotsc, N_t$.
        \end{itemize}
        For the first bullet, since $N_t$ is independent of what comes before
        it, further conditioning on $N_t$ does not alter the distribution of
        $A_{t-1}$.  The second bullet follows from our assumption on $S_t$.
        Together, the bijective map $A_t \leftrightarrow N_tA_{t-1} + S_t$
        leads to a uniform sample $A_t \in \{0, 1, \dotsc, N_1N_2\dotsm N_t -
        1\}$ conditioning on $N_1, N_2, \dotsc, N_t$.  This finishes the
        proof.
    \end{proof}

\subsection{Recycle uniform into fair coin tosses}

    In this subsection, we show how to turn $A_t$ into fair and independent
    coin tosses conditioning on $N_1, N_2, \dotsc, N_t$.  As preparation,
    initialize $N$ as $N_1N_2\dotsm N_t$ and $A$ as $A_t$.  Now enter the
    main loop:
    \begin{itemize}
        \item \tikz [overlay, remember picture]
            \coordinate (start) at (-0.5, 0.3);%
            If $N$ is odd and $A = N - 1$, terminate.
        \item Output the parity of $A$.
        \item Divide both $N$ and $A$ by $2$ and discard the remainders.
        \item \tikz [overlay, remember picture]
            \draw [cyan!80!black, line width=2pt, ->]
            (-.5, -0.1) to [bend left=80, looseness=1.5] (start);%
            Go back to the first bullet.
    \end{itemize}
    This generates a sequence of coin tosses, as depicted in
    Figure~\ref{fig:ANS}.  Despite that we do not know how many tosses it
    will yield, the tosses will be independent and fair, thanks to the
    following lemma.

\begin{figure}
    \centering
    \begin{tikzpicture}
        \foreach \x in {0, 1, 2, 3, 4}{
            \draw [rounded corners=3pt]
                (2/3 + \x, 2) rectangle +(0.5, 1) +(0.25, 0.75) node {$\x$};
        }
        \foreach \Y [count=\y] in {\spadesuit, \clubsuit, \diamondsuit}{
            \draw [rounded corners=3pt]
                (7 + \y, 2) rectangle +(0.5, 1) +(0.25, 0.25) node {$\Y$};
        }
        \foreach \x in {0, 1, 2, 3, 4}{
            \foreach \Y [count=\y] in {\spadesuit, \clubsuit, \diamondsuit}{
                \draw [rounded corners=2pt]
                    (\x*2 + \y*2/3, 0) rectangle +(0.5, 1)
                    +(0.25, 0.75) node {$\x$}
                    +(0.25, 0.25) node {$\Y$}
                ;
            }
        }
        \tikzset{A/.style={line width=2pt, ->, cyan!80!black}}
        \draw [A]
            (5.5, 2.5) to [out=0, in=90, looseness=1.5] (6.5+1/12, 1.2)
            (7+2/3, 2.5) to [out=180,in=90,looseness=1.5] (6.5+1/12, 1.2)
        ;
        \draw [A] (2/3, -0.2) -- (5+2/3, -0.2) to [out=0, in=45] (5, -1);
        \draw [A] (6, -0.2) -- (8 + 1/3, -0.2) to [out=0, in=45] (7+2/3, -1);
        \draw [A] (8+2/3, -0.2) -- (9+2/3, -0.2) to [out=0, in=45] (9, -1);
        \tikzset{
            david/.pic={
                \tikzset{scale=0.15}
                \draw (1, 1) |- (0.8, -0.5) |- (0, -1);
                \draw (0.5, 0) -- (0.8, 0);
                \draw (0.5, -1) -- (0.5, -1.2);
                \draw (0.6, -0.8) -- (0.8, -0.8);
                \filldraw [yellow!80!gray, line width=0.1cm]
                (1, 1) -- (-1, 1) -- (-1, -1) -- cycle;
            },
            1bit/.pic={
                \node [brown!40!yellow] {\huge\bfseries 1};
                \node [brown!30!yellow]
                    {\small b\textcolor{yellow}{i}t\kern1.35pt};
            }
        }
        \tikzset{left color=yellow, right color=white}
        \shadedraw [brown] (2.5, -1.5) circle (1/3) pic {david};
        \shadedraw [brown] (3.5, -1.5) circle (1/3) pic {1bit};
        \shadedraw [brown] (4.5, -1.5) circle (1/3) pic {david};
        \shadedraw [brown] (6 + 1/6, -1.5) circle (1/3) pic {1bit};
        \shadedraw [brown] (7 + 1/6, -1.5) circle (1/3) pic {david};
        \shadedraw [brown] (8.5, -1.5) circle (1/3) pic {1bit};
    \end{tikzpicture}
    \caption{
        A visualization of ANS: Two uniform distributions are combined into
        one before being dyadized into coin tosses.  Unlike \cite{HuV24}, ANS
        avoids paying ``$-3$'' twice (cf.\ Lemma~\ref{lem:-3}).
    }                                                         \label{fig:ANS}
\end{figure}
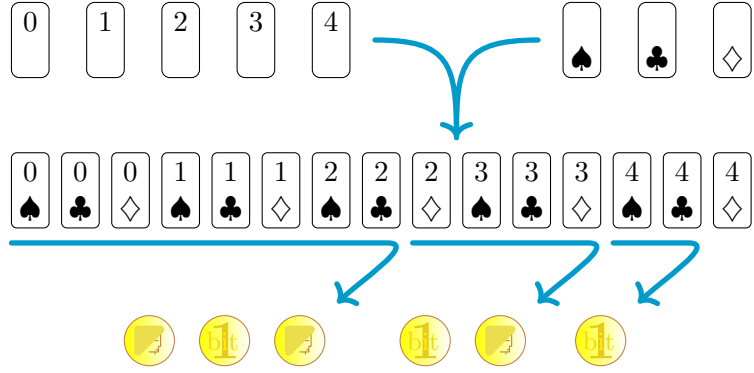

    \begin{lemma}                                            \label{lem:fair}
        Let $N$ be a positive integer, and let $A \in \{0, 1, \dotsc, N -
        1\}$ be uniform.  (A) If $N$ is even, the parity of $A$ is fair and
        $\lfloor A/2 \rfloor \in \{0, 1, \dotsc, N/2\}$ is uniform; and they
        are independent.  (B) If $N$ is odd and $A < N - 1$, the parity of
        $A$ is fair and $\lfloor A/2 \rfloor \in \{0, 1, \dotsc, (N - 1)/2\}$
        is uniform; and they are independent.
    \end{lemma}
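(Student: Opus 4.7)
The plan is to prove both parts by exhibiting an explicit bijection that decomposes $A$ into its quotient and remainder modulo $2$, and then invoking the elementary fact that a uniform distribution on a Cartesian product factors into independent uniforms on each coordinate.

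For part (A), write $A = 2Q + R$ where $Q = \lfloor A/2 \rfloor$ and $R = A \bmod 2 \in \{0,1\}$. Because $N$ is even, the map $A \mapsto (Q, R)$ is a bijection from $\{0, 1, \dotsc, N - 1\}$ onto $\{0, 1, \dotsc, N/2 - 1\} \times \{0, 1\}$. Since $A$ is uniform on the domain and the map is a bijection, the image $(Q, R)$ is uniform on the codomain. Uniformity on a product set is equivalent to independence of the coordinates combined with marginal uniformity, which is precisely the claim. The only subtle point is remembering that the stated range $\{0, 1, \dotsc, N/2\}$ in the lemma should be read as $\{0, 1, \dotsc, N/2 - 1\}$ for the counts to match.

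For part (B), I would reduce to part (A) by conditioning. When $N$ is odd, restricting to the event $\{A < N - 1\}$ leaves $A$ uniform on $\{0, 1, \dotsc, N - 2\}$, a set of even cardinality $N - 1$. Applying the same quotient–remainder bijection on this restricted range lands in $\{0, 1, \dotsc, (N - 3)/2\} \times \{0, 1\}$, so the parity of $A$ is fair and independent of $\lfloor A/2 \rfloor$, which is uniform on its range. The lone element $A = N - 1$ is exactly the one point whose removal makes the total count even; this is also the ``terminate'' trigger in the bulleted loop preceding the lemma, which is why the procedure quietly stops in that case rather than emit a biased bit.

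There really is no hard step here; the content of the lemma is that splitting a uniform variable on an even-sized range into a high part and a parity bit is a textbook product decomposition. If anything, the only thing that needs some care is stating the ranges correctly and explaining why the odd case cannot be salvaged without discarding the top element, since otherwise the parity would be skewed toward $0$ by one extra outcome. I would therefore spend a sentence at the end emphasizing that the ``$A = N - 1$'' exception is essential and that the loop's termination rule is what guarantees the fairness asserted in (B).
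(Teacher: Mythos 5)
Your argument is correct and coincides with what the paper intends: the paper compresses the same quotient--remainder decomposition to ``(A) is straightforward; (B) reduces to (A).'' Your observation that the stated ranges should read $\{0, 1, \dotsc, N/2 - 1\}$ for (A) and $\{0, 1, \dotsc, (N-3)/2\}$ for (B) is a legitimate off-by-one correction to the lemma as written, consistent with the loop's convention that $N$ is replaced by $\lfloor N/2 \rfloor$.
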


    \begin{proof}
        (A) is straightforward.  (B) reduces to (A).
    \end{proof}
    
    Because the information about whether $A = N - 1$ or not is not turned
    into tosses, some entropy is lost in the process.  Luckily, we know how
    to control the  loss.

    \begin{lemma}                                              \label{lem:-3}
        Conditioning on $N$, the expected number of tosses generated by the
        bulleted procedure above is $> \log_2(N) - 3$.
    \end{lemma}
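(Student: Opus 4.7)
The plan is to reduce the claim to the entropy bound $H(T) < 3$, where $T$ denotes the (random) number of tosses, via the exact conservation identity $\mathbf E[T] = \log_2 N - H(T)$. To establish this identity, I would observe that the algorithm defines a bijection from $A \in \{0, 1, \dotsc, N - 1\}$ to the tuple $(T, B_1, \dotsc, B_T)$: unrolling the recursion yields $A = 2^T(N_T - 1) + \sum_{j=1}^T 2^{j-1} B_j$, with $N_T = \lfloor N/2^T \rfloor$ determined by $N$ and $T$ alone. Iterating Lemma~\ref{lem:fair}, each bit $B_i$ is fair and independent of the subsequent halved state, so conditional on $T$ the bits are i.i.d.\ fair. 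The chain rule then gives $\log_2 N = H(A) = H(T) + H(B_1, \dotsc, B_T \mid T) = H(T) + \mathbf E[T]$, reducing the lemma to bounding $H(T)$.

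Next I would pin down the distribution of $T$ explicitly. Writing $N_0 \coloneqq N$ and $N_{i+1} \coloneqq \lfloor N_i/2 \rfloor$, let $i_1 < i_2 < \dotsb < i_J$ denote the indices where $N_{i_j}$ is odd (so $N_{i_J} = 1$). Since only even halvings occur strictly between consecutive odd indices, one has the key identity $N_{i_k} - 1 = 2^{i_{k+1} - i_k} N_{i_{k+1}}$. Substituting this into $\PP\{T = i_j\} = (1/N_{i_j}) \prod_{k<j}(1 - 1/N_{i_k})$ and telescoping yields $\PP\{T = i_j\} = 2^{i_j - i_1}/N_{i_1}$; in particular, the pmf of $T$ is proportional to $2^t$ on its finite support.

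To finish, I would reflect by setting $R \coloneqq i_J - T$, so that $\PP\{R = r\} \propto 2^{-r}$ on a subset of $\NN$. Proposition~\ref{pro:+2} then instantly yields $H(T) = H(R) \leq 2 < 3$, completing the proof with slack to spare.

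The main obstacle I anticipate is the telescoping identity itself: the factors $(1 - 1/N_{i_k})$ look unfriendly until one notices $N_{i_k} - 1 = 2^{i_{k+1} - i_k} N_{i_{k+1}}$, at which point the product collapses cleanly. A tempting alternative would be to induct directly on $\mathbf E(2n) = 1 + \mathbf E(n)$ and $\mathbf E(2n+1) = \tfrac{2n}{2n+1}(1 + \mathbf E(n))$, but even at $N = 5$ the inductive hypothesis $\mathbf E(2) > -2$ only yields $\mathbf E(5) > -4/5$, short of the required $\log_2 5 - 3 \approx -0.68$; the entropy route above avoids this difficulty entirely.
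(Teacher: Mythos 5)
Your proof is correct and takes a genuinely different route from the paper's, with a notable bonus. The paper inducts directly on the recurrences $\TT(2n) = 1 + \TT(n)$ and $\TT(2n+1) = \frac{2n}{2n+1}\bigl(1 + \TT(n)\bigr)$; as you correctly observe, the naive hypothesis $\TT(N) > \log_2(N) - 3$ does not close this induction, which is why the paper strengthens it to $\TT(N) \geq (1 + 2/N)\log_2(N) - 3$ and outsources a mean-value-theorem estimate to a helper lemma. Your route instead exhibits the bijection $A \leftrightarrow (T, B_1, \dotsc, B_T)$; its image is exactly $\{(t, b_1, \dotsc, b_t) : N_t \text{ odd},\, b_j \in \{0,1\}\}$, a set of size $\sum_{t : N_t \text{ odd}} 2^t = N$ since those $t$'s are precisely the positions of the $1$'s in the binary expansion of $N$. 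This single counting fact gives both the conditional uniformity of the bits given $T$ and the pmf $\PP\{T = t\} = 2^t/N$ at a stroke (your telescoping derivation is a fine sanity check but not needed once you count), and the chain rule then yields the conservation identity $\mathbf{E}[T] = \log_2(N) - H(T)$. Reflecting $T$ about its maximum and invoking Proposition~\ref{pro:+2} bounds $H(T) \leq 2$, in fact $H(T) < 2$ strictly since $T$ has finite support and so cannot be geometric. You therefore obtain $\TT(N) > \log_2(N) - 2$, which is strictly stronger than the stated $-3$ and is precisely the Conjecture the paper leaves open at the end of Section~\ref{sec:asymmetric}; the authors should be told. One small caveat: the phrase ``iterating Lemma~\ref{lem:fair}'' glosses over the need to carry along the independence of the already-emitted bits from the current state $A_s$ at each step; the cardinality argument above makes the conditional uniformity immediate and is the cleanest thing to write down.
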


    \begin{proof}
        Let $\TT(N)$ be the expected number of tosses we gain as a function
        in $N$.  By the recursive nature of the procedure, there is a
        recursive relation
        \begin{equation}
            \TT(N) = 1 + \TT\Bigl(\frac N2\Bigr)                 \label{even}
        \end{equation}
        when $N$ is even.  When $N$ is odd, there is
        \begin{equation}
            \TT(N) = \frac{N - 1}{N}
            \Bigl( 1 + \TT\Bigl(\frac{N - 1}{2}\Bigr) \Bigr).     \label{odd}
        \end{equation}
        We now run mathematical induction on a slightly stronger hypothesis:
        $\TT(N) \geq (1 + 2/N) \log_2(N) - 3$.  It remains to check
        compatibility of \eqref{even} and \eqref{odd} with the hypothesis.
        For the even case, we have
        \begin{align*}
            \TT(N)
            & = 1 + T\Bigl(\frac N2\Bigr) \\
            & \geq 1 +
                \Bigl(1 + \frac4N\Bigr) \log_2\Bigl(\frac N2\Bigr) - 3 \\
            & = \Bigl(1 + \frac2N\Bigr) \log_2(N) - 3
                + \frac2N (\log_2(N) - 2) \\
            & \geq \Bigl(1 + \frac2N\Bigr) \log_2(N) - 3.
        \end{align*}
        The last inequality holds when $N \geq 4$, proving the induction
        hypothesis for the even case.  For the odd case, we have
        \begin{align}
            \TT(N)
            & = \frac{N - 1}{N}
                \Bigl( 1 + T\Bigl(\frac{N - 1}{2}\Bigr) \Bigr) \notag\\
            & \geq \frac{N - 1}{N} \Bigl( 1 + \Bigl(1 + \frac{4}{N - 1}\Bigr)
                \log_2\Bigl(\frac{N - 1}{2}\Bigr) - 3 \Bigr) \notag\\
            & = \frac{N + 3}{N} \log_2(N - 1) - \frac{3K + 1}{N} \notag\\
            & \geq \Bigl(1 + \frac2N\Bigr)\log_2(N) - 3, \label{wait}
        \end{align}
        where the last inequality is dealt in the next lemma for $N \geq 9$.
        For $N = 2, 3, 5, 7$, we check $\TT(N) > \log_2(N) - 3$ directly.
        This finishes proving $\TT(N) \geq (1 + 2/N) \log_2(N) - 3 >
        \log_2(N) - 3$ up to the correctness of \eqref{wait}.
    \end{proof}

    \begin{lemma}
        Inequality \eqref{wait} is true for $N \geq 9$.
    \end{lemma}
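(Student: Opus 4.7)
The plan is to clear the $1/N$ factors in \eqref{wait}, reducing it to the equivalent statement that
\[
    f(N) := (N+3)\log_2(N-1) - (N+2)\log_2(N) - 1 \geq 0
    \quad\text{for all } N \geq 9.
\]
(Note this already relies on reading the typo ``$3K+1$'' in \eqref{wait} as $3N+1$, which is what the algebra in the even case forces.) I would then split the argument into (i) a base case at $N=9$ checked by integer arithmetic and (ii) a monotonicity argument showing $f$ is nondecreasing on $[9,\infty)$, so that the base case propagates.

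For the base case, $f(9) \geq 0$ is equivalent to $11\log_2 9 \leq 35$, i.e., to the integer inequality $9^{11} \leq 2^{35}$, i.e., $31{,}381{,}059{,}609 \leq 34{,}359{,}738{,}368$, which can be verified by hand. For monotonicity, I would differentiate to obtain
\[
    f'(N) = \log_2\Bigl(1 - \tfrac{1}{N}\Bigr) + \frac{1}{\ln 2}\cdot\frac{2(N+1)}{N(N-1)},
\]
and control the negative log term via the elementary bound $\ln(1-x) \geq -x/(1-x)$ applied at $x = 1/N$, which gives $\log_2(1 - 1/N) \geq -1/((N-1)\ln 2)$. Substituting and simplifying yields
\[
    f'(N) \geq \frac{1}{(N-1)\ln 2}\Bigl(\frac{2(N+1)}{N} - 1\Bigr) = \frac{N+2}{N(N-1)\ln 2} > 0,
\]
which closes the monotonicity step.

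The main obstacle is tightness at the base case: numerically $f(9) \approx 0.13$, a thin margin that a coarse decimal estimate of $\log_2 9$ could easily blow past in the wrong direction. This is why I would prefer the integer comparison $9^{11} < 2^{35}$ for the base case rather than a floating-point check. Once past $N=9$, the lower bound on $f'$ has so much slack that extending to all $N \geq 9$ is routine.
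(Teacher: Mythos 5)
Your proof is correct, and it takes a genuinely different route from the paper's.  Both you and the paper start with the same clearing of denominators (multiply by $N$, cancel $3N$) to get $(N+3)\log_2(N-1) \geq (N+2)\log_2(N) + 1$, and both read the ``$3K+1$'' as the typo for $3N+1$ that the preceding algebra forces.  From there the paper applies the mean value theorem to replace $\log_2(N-1)$ by the lower bound $\log_2(N) - \log_2(e)/(N-1)$, cancels, and arrives at the single-variable inequality $\log_2(N) \geq 1 + \log_2(e)\tfrac{N+3}{N-1}$, which it then disposes of by noting that the left side increases, the right side decreases, and they cross near $N \approx 8.46$.  You instead define $f(N) = (N+3)\log_2(N-1) - (N+2)\log_2(N) - 1$, verify $f(9)\geq 0$ by the exact integer comparison $9^{11} = 31{,}381{,}059{,}609 < 34{,}359{,}738{,}368 = 2^{35}$, and prove $f' > 0$ via $\ln(1-x) \geq -x/(1-x)$; I checked your derivative formula $f'(N) = \log_2(1-1/N) + \tfrac{2(N+1)}{N(N-1)\ln 2}$ and the resulting bound $f'(N)\geq \tfrac{N+2}{N(N-1)\ln 2}$, and both are right.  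The trade-off: the paper's MVT step deliberately gives away some slack so the final inequality becomes a clean crossing-point statement, but it then leans on a numerically-located crossing (and a tighter gap of only about $0.006$ at $N=9$); your version keeps the full slack $f(9)\approx 0.13$, certifies the base case with integer arithmetic, and establishes monotonicity with an honest elementary inequality.  Your approach is the more self-contained and auditable of the two, at the mild cost of a derivative computation.
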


    \begin{proof}
        \def\?{\stackrel{\smash?}}
        Multiply both sides by $N$ and cancel $3N$ to reduce the goal to
        \[
            (N + 3) \log_2(N - 1)
            \?\geq (N + 2)\log_2(N) + 1.
        \]
        Apply mean value theorem to $\log_2$ over $[N - 1, N]$ to reduce the
        goal to
        \[
            (N + 3) \Bigl(\log_2(N) - \frac{\log_2(e)}{N - 1}\Bigr)
            \?\geq (N + 2)\log_2(N) + 1.
        \]
        Move $\log_2(N)$ to the left and everything else to the right to
        reduce the goal to
        \[
            \log_2(N)
            \?\geq 1 + \log_2(e)\frac{N + 3}{N - 1}.
        \]
        LHS is monotonically increasing and RHS is monotonically decreasing.
        They meet at around $N = 8.46$.  So \eqref{wait} holds for $N \geq
        9$.
    \end{proof}

    So far we had showed that out of a uniform sample with $N$ possible
    outcomes, $\log_2(N) - 3$ bits of randomness can be extracted.  Numerical
    evaluations show that $\TT(N) > \log_2(N) - 2$ holds and ``$-2$'' is
    asymptotically tight.  But we have difficulty guessing induction
    hypothesis.

    \begin{conjecture}                                        \label{conj:-2}
        Conditioning on $N$, the expected number of tosses generated by the
        bulleted procedure at the beginning of this subsection is $\geq
        \log_2(N) - 2$.
    \end{conjecture}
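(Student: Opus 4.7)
The plan is to reduce the conjecture to Proposition~\ref{pro:+2} via an information-theoretic bookkeeping identity. Write $T$ for the random step at which the procedure terminates and $B = (B_0, B_1, \dotsc, B_{T-1})$ for the string of bits it emits. I claim $\TT(N) = \log_2(N) - H(T)$; given that, the desired bound $\TT(N) \geq \log_2(N) - 2$ becomes $H(T) \leq 2$, which is Proposition~\ref{pro:+2} in disguise.

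To prove the identity I would exhibit a bijection $A \leftrightarrow (B, T)$. The trajectory $N_i \coloneqq \lfloor N/2^i \rfloor$ is deterministic, and at termination we have the forced value $A_T = N_T - 1$, so $A$ can be reconstructed backwards via $A_i = 2 A_{i+1} + B_i$. Hence $H(B, T) = H(A) = \log_2(N)$. A short induction (checking that no earlier step satisfies $A_i = N_i - 1$ when $N_i$ is odd) pins the image of the bijection down to $\{(b, t) : b \in \{0,1\}^t, N_t \text{ odd}\}$, so the number of valid length-$t$ bit strings is $c_t = 2^t$ when $N_t$ is odd and $0$ otherwise. Uniformity of $A$ then forces $B \mid T$ to be uniform on $\{0, 1\}^T$, so $H(B \mid T) = E[T] = \TT(N)$ and $\log_2(N) = H(T) + \TT(N)$, as claimed.

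The same count gives the distribution of $T$ outright: $\PP\{T = t\} = c_t/N = 2^t/N$ whenever $t$ lies in the positions $a_0 < a_1 < \dotsb < a_r$ of the $1$-bits in $N$'s binary expansion. Setting $L \coloneqq a_r - T$, the variable $L$ takes values in $\Lambda \coloneqq \{0,\, a_r - a_{r-1},\, \dotsc,\, a_r - a_0\} \subset \NN$ with $\PP\{L = \ell\} = 2^{a_r - \ell}/N \propto 2^{-\ell}$, exactly the hypothesis of Proposition~\ref{pro:+2}. Consequently $H(T) = H(L) \leq 2$, and the conjecture follows.

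The main conceptual step is recognizing that $\TT(N)$ and $H(T)$ sum to $\log_2(N)$; after that the problem collapses onto an already-proven statement, sidestepping the induction-hypothesis difficulty mentioned in the remark preceding the conjecture. The reduction also explains the asymptotic tightness: as $k \to \infty$ with $N = 2^k - 1$, the distribution on $L$ approaches the geometric distribution with success rate $1/2$, which is exactly the equality case of Proposition~\ref{pro:+2}.
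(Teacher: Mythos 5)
Your proof is correct, and it settles a question that the paper explicitly leaves open: the authors establish only the weaker $\TT(N) > \log_2(N) - 3$ bound (Lemma~\ref{lem:-3}) via a recursion and a hand-tuned induction hypothesis $(1 + 2/N)\log_2 N - 3$, and they remark that they could not guess the right hypothesis to sharpen the constant to $-2$. Your route is different in kind. Instead of bounding a recursion, you exhibit the bijection $A \leftrightarrow (B, T)$ (with $B$ the emitted bit string, $T$ the stopping step, and $A_i = 2A_{i+1} + B_i$ reconstructed backward from the forced value $A_T = N_T - 1$), which gives the \emph{exact} identity $\TT(N) = \log_2 N - H(T)$. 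The key verification---that no early termination occurs, that is $A_i \ne N_i - 1$ whenever $i < T$ and $N_i$ is odd---follows from the inductive bound $A_i \leq N_i - 1$, which forces $A_i \leq 2N_{i+1} - 1 = N_i - 2$ in the odd case; and $\sum_{t \,:\, N_t \text{ odd}} 2^t = N$ (the binary expansion of $N$) confirms the bijection is onto $\{0, \dotsc, N - 1\}$. Then $\PP\{T = t\} = 2^t/N$ on the set of $1$-bit positions of $N$, so the reflection $L = a_r - T$ satisfies $\PP\{L = \ell\} \propto 2^{-\ell}$ on a subset of $\NN$, and Proposition~\ref{pro:+2} gives $H(T) = H(L) \leq 2$. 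The identity buys more than the conjectured inequality: it pinpoints the loss as exactly $H(T)$, explains the paper's numerical observation of asymptotic tightness at $N = 2^k - 1$ (where $L$ approaches a geometric$(1/2)$ truncated at $k - 1$), and recovers Lemma~\ref{lem:-3} as an immediate corollary with no further computation.
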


    \begin{remark}
        In a private communication, Draper (a coauthor of
        \cite{DrS25,DrS25a}) pointed out that Lemma~\ref{lem:-3} is a
        consequence of \cite[(14)]{Eli72} and that Conjecture~\ref{conj:-2}
        can be proved by an argument Knuth and Yao used to prove the ``$+2$''
        penalty.  More details are put in Appendix~\ref{app:-2}.
    \end{remark}

    \begin{remark}
        Despite sharing common elements, the concurrent work \cite{DrS25a}
        avoids depleting $A$.  Instead, Draper and Saad extract bits only
        when the penalty term $(N - 1) / N$ in \eqref{odd} is small and break
        the loop early.  On the other hand, our approach loses a considerable
        amount of entropy when $N$ eventually becomes $15$, $7$, and $3$.
    \end{remark}

\section{Proof of the Main Theorem}                         \label{sec:proof}

\begin{figure}
    \centering
    \begin{tikzpicture}
        \tikzset{
            A/.style={
                draw=cyan!80!black, text=cyan!50!black,
                line width=2pt, ->, shorten >=2pt, shorten <=2pt
            }
        }
        \draw [rounded corners=5pt]
            (0, 3) rectangle node {Subinterval lookup} (6, 2)
            (0, 1) rectangle node {ANS} (6, 0)
        ;
        \draw [A] (-3, 3) to [bend left=10]
            node [above] {fresh tosses}(0, 3);
        \draw [A] (6, 3) to [bend left=10]
            node [above] {$D$-samples} (9, 3);
        \draw [A] (6, 2) to [bend left=90, looseness=5]
            node [right] {$N$'s and $S$'s} (6, 1);
        \draw [A] (0, 1) to [bend left=90, looseness=5]
            node [left] {recycled tosses} (0, 2);
    \end{tikzpicture}
    \caption{
        An overview of the MichelangeRoll.
    }                                                    \label{fig:overview}
\end{figure}
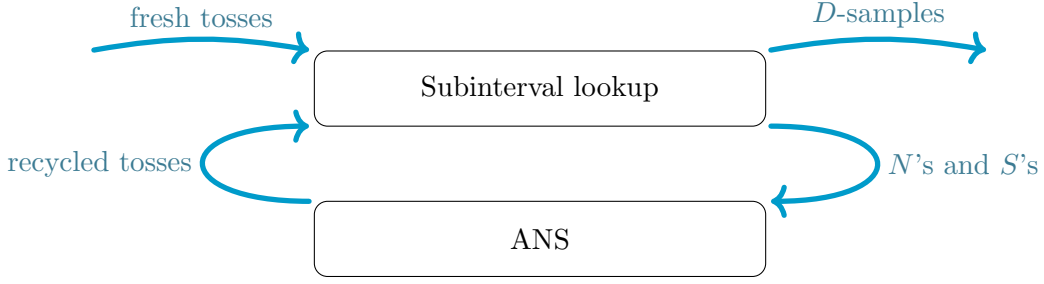

    In this section, we prove Theorem~\ref{thm:main}.  See
    Figure~\ref{fig:overview} for an overview.  Here is some initialization:
    \begin{itemize}
        \item Let $2^k$ be the least power of $2$ that is $\geq m$.
        \item Let $2^j$ be the least power of $2$ that is $\geq
            1/\varepsilon^2$.
        \item Let $M$ be $\lfloor 2^{j+k} / m \rfloor \cdot m$, a multiple of
            $m$ in the range $[(1 - \varepsilon^2) 2^{j+k}, 2^{j+k}]$
        \item Prepare a clean ANS with $A_0 = 0$ and time index $t = 1$.
    \end{itemize}
    Now enter the main loop:
    \begin{itemize}
        \item \tikz [overlay, remember picture]
            \coordinate (start) at (-0.5, 0.2);%
            At time $t$, generate a uniform integer sample $U_t$ in $[0,
            2^{j+k})$ by tossing $j + k$ coins.
        \item Among the following list of subintervals, find out the index
            $I_t$ of the one that contains $U_t$: $[0, Mp_1)$\,,\, $[Mp_1,
            M(p_1 + p_2))$\,,\, $\dotsc$\,,\, $[M(1 - p_n), M)$\,,\, $[M,
            2^{j+k})$.  (Note: the $i$th subinterval is $Mp_i$ units long.)
        \item Let $N_t$ be the length of the $I_t$th subinterval.  Let $S_t$
            be the distance between $U_t$ and the left end of the $I_t$th
            subinterval.  Recycle ($N_t, S_t$) into the ANS.
        \item If $I_t \leq n$, output $I_t$ as the next sample; if $I_t = n +
            1$, it is a rejection and nothing is output.
        \item If the product of $N$'s in the ANS becomes too big, turn the
            $A_t$ into a sequence of coin tosses that can be used to generate
            $U_t$.  Afterwards, reset the ANS by forgetting all $N$'s and
            setting $t = 0$ and $A_0 = 0$.
        \item \tikz [overlay, remember picture]
            \draw [cyan!80!black, line width=2pt, ->]
            (-.5, -0.1) to [bend left=60, looseness=1] (start);%
            Go back to the first bullet with $t$ increased by $1$.
    \end{itemize}
    It remains to check five things: One, if $I_t \leq n$, then $I_t \sim D$.
    Two, the bits ANS yields are indistinguishable from a fair coin.  Three,
    the expected consumption of fresh tosses (those not from the ANS) is
    $H(D) + \varepsilon$.  Four, the space complexity.  Five, the time
    complexity.

\subsection{MichelangeRoll generates \texorpdfstring{$D$}{D} exactly}

    The likelihood ratio between $I_t = 1$ and $I_t = 2$ is the ratio between
    the lengths of the two subintervals, which is $p_1 : p_2$.  This relation
    holds for all pairs of indices, hence $I_t$ must follow $D$ when it is
    not $n + 1$.

\subsection{MichelangeRoll recycles entropy}

    Note that $N_t$ is a function in $I_t$ and there is a bijection between
    $U_t \leftrightarrow (I_t, S_t)$.  Each time we get a new $U_t$, it has
    no memory of anything with subscript $t - 1$ and earlier.  Hence, the
    sequence $I_1, I_2, \dotsc$ are independent copies of $I_1$; the sequence
    $N_1, N_2, \dotsc$ are independent copies of $N_1$.  More importantly,
    each $S_t$ is uniform in $[0, N_t)$ conditioning on $N_t$ and is
    independent of all other $N$'s.  Now the premises of
    Lemma~\ref{lem:uniform} are satisfied, so $A_t$ is uniform in $[0,
    N_1N_2\dotsm N_t - 1)$ conditioning on $I_1, I_2, \dotsc, I_t$ and $N_1,
    N_2, \dotsc, N_t$.

    By Lemma~\ref{lem:fair}, ANS yields tosses that are fair and independent
    of each other.  Also, by Lemma~\ref{lem:uniform}, these tosses are
    independent of the $I$'s and the $N$'s, despite that the number of tosses
    may depend on the $N$'s.  Those tosses are indistinguishable from a
    fresh source of fair and independent tosses.

\subsection{MichelangeRoll breaks the ``\texorpdfstring{$+2$}{+2}'' barrier}

    Entropy is lost at two places:
    \begin{itemize}
        \item It forgets which $I_t$ are rejections.
        \item It forgets $3$ bits when turning $A$ into coin tosses, by
            Lemma~\ref{lem:-3}.
    \end{itemize}
    For the first bullet, observe that the rejection rate is $1 - M / 2^{j+k}
    \leq \varepsilon^2$.  So $H(\text{Bernoulli}(\varepsilon^2))$ bits are
    lost for each $t$.  That is, $H(\text{Bernoulli}(\varepsilon^2)) / (1 -
    \varepsilon^2)$ bits are lost for each $D$-sample.  When $\varepsilon$ is
    small, this quantity is about $O(\varepsilon^2 \log\varepsilon^{-2})$,
    which relaxes to $O(\varepsilon)$.

    For the second bullet, we lost $3$ bits every time the ANS is reset.  We
    let the ANS hold integers up to  $(j + k) / \varepsilon =
    O(\log_2(m/\varepsilon) / \varepsilon)$ bits long, so it resets once
    every $1 / \varepsilon$ samples of $S_t$.  This means that $3 \varepsilon
    / (1 - \varepsilon^2)$ bits are lost per $D$-sample.
    
    Overall, it looses $O(\varepsilon)$ bits per sample, making the total
    entropy cost $H(D) + O(\varepsilon)$.

\subsection{The space complexity}

    We want to store the prefix sums of $n$ integers $Mp_1, Mp_2, \dotsc,
    Mp_n$ so we can binary-search for $I_t$.  This costs $n (j + k) = O(n
    \log(m/\varepsilon))$ memory.  We also want to maintain $A_t$ and
    $N_1N_2\dotsm N_t$ (just the product, not individual $N$'s) of the ANS,
    as well as a buffer for the $(j + k) / \varepsilon$ extracted coin
    tosses.  This costs $O(\log(m/\varepsilon) / \varepsilon)$
    memory\footnote{ Note that the buffer is guaranteed to be emptied before
    the next reset.  This is because each iteration of the main loop consumes
    $j + k$ while multiplying $N$ by something less than $2^{j+k}$.  In
    particular, this is not just a high-probability bound on memory as in
    the queueing theory, but an almost always bound.}.  In total, we need
    $O((n + 1/\varepsilon) \log(m/\varepsilon))$ memory.

\subsection{The time complexity}

    Each $D$-sample is equivalent to about $1 / (1 - \varepsilon^2)$ samples
    of $I_t$.  Each $I_t$ is generated by a uniform sample $U_t$ in $[0,
    2^{j+k})$, which takes $O(j + k) = O(\log(m/\varepsilon))$ operations.
    Finding the index $I_t$ takes $O(\log(m/\varepsilon) \log(n))$ operations
    for binary-searching in $n$ prefix sums.  This relaxes to
    $O(\log(m/\varepsilon)^2)$ as $m/\varepsilon \geq m \geq n$.

    Now pushing $(N_t, S_t)$ into the ANS takes $O(\log(m/\varepsilon)^2 /
    \varepsilon)$ operations, for that is what schoolbook integer
    multiplication takes.  Extracting bits from the ANS takes
    $O(\log(m/\varepsilon) / \varepsilon)$ operations per reset.  Overall,
    the time complexity is $O(\log(m/\varepsilon)^2 / \varepsilon)$ per
    $D$-sample.

\subsection{Wrap up}

    So far, we have shown that MichelangeRoll uses $H(D) + O(\varepsilon)$
    coin tosses, $O((n + 1/\varepsilon) \log(m/\varepsilon))$ memory, and
    $O(\log(m/\varepsilon)^2 / \varepsilon)$ operations.  Rescaling
    $\varepsilon$ by a constant factor matches the desired statement of
    Theorem~\ref{thm:main}.

\section{Conclusion}

    This paper introduces MichelangeRoll, an entropy sculptor using
    asymmetric numeral systems to recycle leftover entropy.  By storing and
    processing entropy with integer variables, it achieves an entropy cost of
    $H(D) + \varepsilon$ per sample, breaking the complexity and exactness
    barrier ``$+2$'' established earlier.

    Our approach shows that exact simulation of discrete distributions is not
    too different from the approximate counterpart: Now that both can be
    performed with diminishing entropy losses and linear-ish space and time,
    exact simulation becomes more competitive for having cleaner theoretical
    guarantees.  This makes MichelangeRoll practical for cases like
    Bernoulli$(1/100)$, where $H(D)$ is small and so ``$+2$'' used to be a
    large overhead.

    Future work might include simplifying the implementation of ANS further,
    as well as recycling the rejection entropy from the A--R sequence.  We
    are also interested in generalizing MichelangeRoll to take non-dyadic
    uniform or biased Bernoulli as entropy source.

\begin{figure}
    \centering
    \tikzset{
        V/.style={draw, align=center},
        A/.style={shorten <=2pt, shorten >=2pt, line width=1pt, ->},
        N/.style={auto, sloped},
    }
    \noindent
    \kern-10cm
    \begin{tikzpicture}
        \draw
            (0, 0) node (FC) [V, circle] {Fair\\Coin}
            (-5, 6) node (BC) [V, ellipse] {Biased\\Coin}
            (5, 6) node (DD) [V, trapezium]
                {Discrete\\Distribution}
            (-5, -6) node (UD) [V, regular polygon, inner sep=-4pt]
                {\\[-1pt] Uniform\\Dice}
            (5, -6) node (MC) [V, cloud, cloud ignores aspect]
                {Markov\\Chain}
        ;
        \draw [A] (FC) to [bend left=20, '] node [N] {
                \cite{Lum13}
            } (BC);
        \draw [A] (FC) to [bend right=20] node [N] {
                \cite{Lum13, HuV24}
            } (UD);
        \draw [A, line width=2pt] (FC) to [bend right=20]
            node [N, ', align=center] {
                \cite{KnY76, SFR20, DrS25, DrS25a} \\ MichelangeRoll
            } (DD);
        \draw [A] (BC) to [bend left=20] node [N] {
                \cite{Von63, Eli72, Per92}
            } (FC);
        \draw [A] (BC) to [loop, out=180, in=90, looseness=5] node [N] {
                \cite{Koz14}
            } (BC);
        \draw [A] (BC) to [bend left=20] node [N] {
                \cite{Abr96, Roc91}
            } (DD);
        \draw [A] (DD) to [loop, out=90, in=0, looseness=5] node [N] {
                \cite{PaL05}
            } (DD);
        \draw [A] (MC) to [bend left=20] node [N, '] {
                \cite{Blu84}
            } (FC);
        \draw [A] (MC) to [loop, out=0, in=-90, looseness=5] node [N, '] {
                \cite{HaH97}
            } (MC);
            
    \end{tikzpicture}
    \kern-10cm
    \caption{
        Our classification of earlier works based on the type of entropy
        source and the type of target distribution.
    }                                                     \label{fig:digraph}
\end{figure}
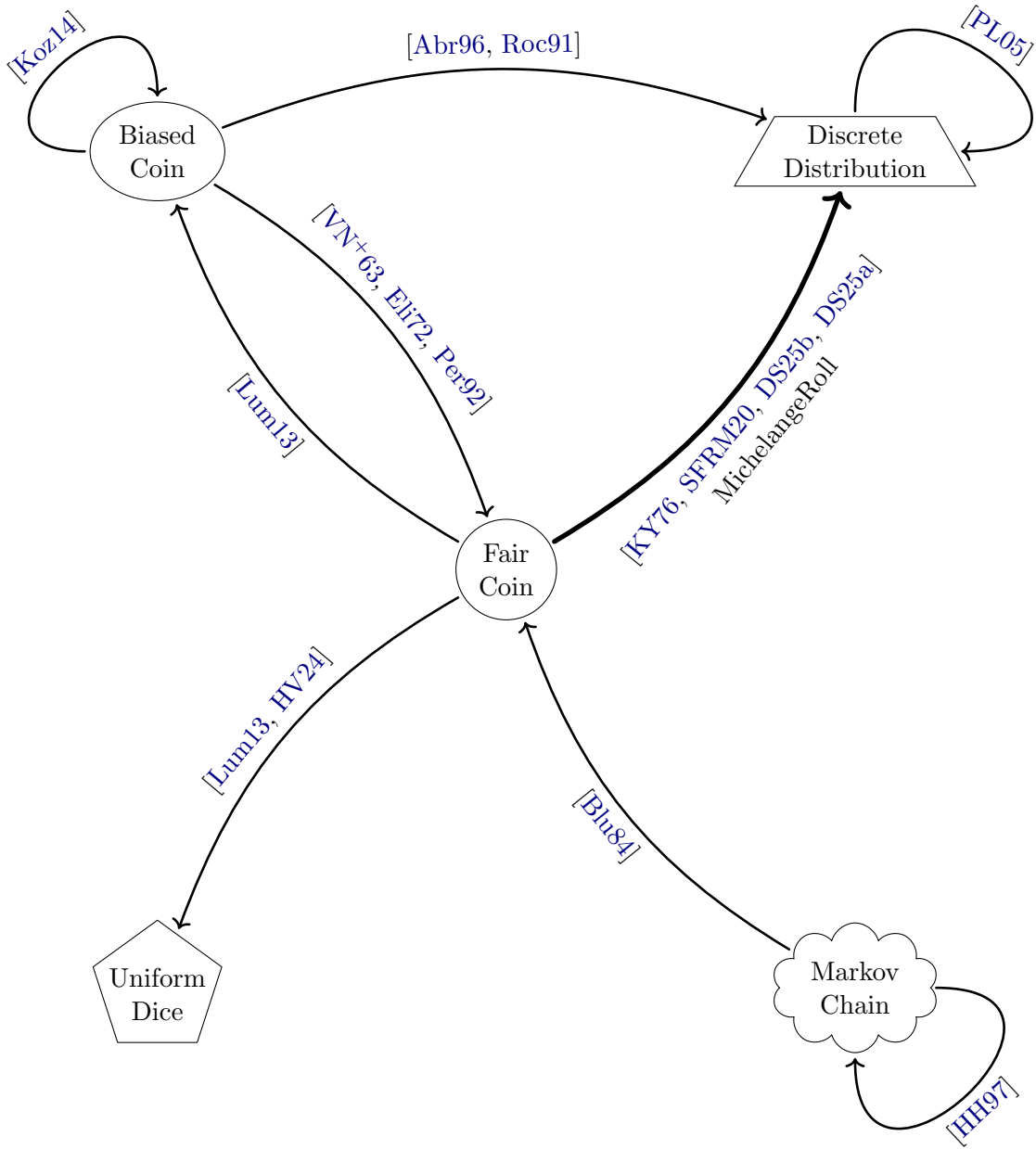

\appendix

\section{Confirmation of Conjecture~\ref{conj:-2}}             \label{app:-2}

    In this appendix, we briefly go over Draper's idea (sent over a private
    communication) that confirms Conjecture~\ref{conj:-2}.

    To begin, let us recall that we extract bits by
    looking at the parity of $A$ before halving $A$ and $N$,
    and terminate when $N$ is odd and $A = N - 1$.
    This procedure can be paraphrased as follows:
    Let $k_1 > k_2 > \dotsb > k_w$ be the positions of $1$'s in the binary
    representation of $N$, i.e.,
    $N = 2^{k_1} + 2^{k_2} + \dotsb + 2^{k_w}$.
    Then we can extract $k_1$ bits if $A \in [0, 2^{k_1})$,
    $k_2$ bits if $A - 2^{k_1} \in [0, 2^{k_2})$,
    $k_3$ bits if $A - 2^{k_1} - 2^{k_2} \in [0, 2^{k_3})$,
    and so on.
    In sum, the expected number of bits extracted is
    \[
        \sum_{i=1}^{w} k_i \cdot
        \PP \{A \in \text{an interval of length } 2^{k_i}\}
        = \frac1N \sum_{i=1}^{w} k_i \cdot 2^{k_i}.
    \]
    Now the sum next to $1/N$ was studied by Knuth and Yao
    \cite[(2.19)]{KnY76} and can be expressed as
    \[ -\nu\Bigl( \sum_{i=1}^{w} 2^{k_i} \Bigr) = -\nu(N). \]
    Now \cite[Theorem~2.2]{KnY76} shows that $-N \log_2(N) \le \nu(N) \le -N
    \log_2(N) + 2N$.
    And hence the number of bits extracted is $-\nu(N)/N \ge \log_2(N) - 2$.

\section*{Acknowledgments}

    The authors thank Feras Saad for sharing unpublished manuscripts.

\bibliographystyle{alpha}
\bibliography{RecycleEntropy-2}

\end{document}